
\documentclass[twocolumn,aps,nofootinbib,superscriptaddress,tightenlines,notitlepage,pra,longbibliography]{revtex4-1}

\usepackage{changes}
\definechangesauthor[name={Antonio}, color=brown]{Ant}

\usepackage[utf8]{inputenc}  
\usepackage[T1]{fontenc}     
\usepackage[english]{babel}  

\usepackage[colorlinks=true,linkcolor=teal,citecolor=teal,urlcolor=teal]{hyperref}

\usepackage{graphicx} 
\usepackage[babel]{microtype}  
\usepackage{amsmath,amssymb,amsthm,bm,amsfonts,mathrsfs,bbm} 
\usepackage{setspace}
\usepackage{times}
\usepackage{braket}
\usepackage[bottom]{footmisc}
\usepackage{blochsphere}
\usepackage{pgfplots}
 
\pgfplotsset{compat=1.17}
\usetikzlibrary{calc,3d,shapes, pgfplots.external, intersections}
\usepackage{algorithm}
\usepackage{physics}
\theoremstyle{plain}
\usepackage[noend]{algorithmic}
\usepackage{eqparbox}

\renewcommand{\abs}[1]{\left\Vert #1\right\Vert}
\usepackage{physics}
\usepackage{bbold}
\usepackage{tikz}
\usepackage{mathtools}
\usetikzlibrary{arrows.meta}
\usepackage{csquotes}

\usepackage{xspace}  
\usepackage{pgfplots}
\usepackage{verbatim}
\usepackage{amsmath}

\newtheorem{theorem}{Theorem}
\newtheorem*{theorem*}{Theorem}

\usepackage{soul}

\newcommand\id{\leavevmode\hbox{\small1\kern-3.3pt\normalsize1}}

\newtheorem{definition}[theorem]{Definition}

\newtheorem{lemma}[theorem]{Lemma}

\newcommand{\commentFre}[1]{{\textcolor{black}{#1}}}

\newcommand{\je}[1]{{\textcolor{black}{#1}}}
\newcommand{\newje}[1]{{\textcolor{black}{#1}}}

\newcommand{\newtext}[1]{{\textcolor{black}{#1}}}

\newcommand{\newnewtext}[1]{{\textcolor{black}{#1}}}

\begin{document}

\title{\newtext{Stochastic} noise can be helpful for variational quantum algorithms}

\author{Junyu Liu}

\affiliation{Pritzker School of Molecular Engineering, The University of Chicago, Chicago, IL 60637, USA}

\affiliation{\newnewtext{School of Computing and Information, University of Pittsburgh, 
Pittsburgh,
PA 15260, USA}}
\affiliation{Chicago Quantum Exchange, Chicago, IL 60637, USA}
\affiliation{Kadanoff Center for Theoretical Physics, The University of Chicago, Chicago, IL 60637, USA}

\author{Frederik Wilde}
\affiliation{Dahlem Center for Complex Quantum Systems,
Freie Universit\"{a}t Berlin, Berlin, 14195, Germany}

\author{Antonio Anna Mele}
\affiliation{Dahlem Center for Complex Quantum Systems,
Freie Universit\"{a}t Berlin, Berlin, 14195, Germany}

\author{Xin Jin}
\affiliation{\newnewtext{School of Computing and Information, University of Pittsburgh, 
Pittsburgh,
PA 15260, USA}}

\affiliation{\newnewtext{Electrical Engineering and Computer Science, Technische Universität Berlin, Berlin, 10587, Germany}}

\author{Liang Jiang}
\affiliation{Pritzker School of Molecular Engineering, The University of Chicago, Chicago, IL 60637, USA}
\affiliation{Chicago Quantum Exchange, Chicago, IL 60637, USA}

\author{Jens Eisert}
\affiliation{Dahlem Center for Complex Quantum Systems,
Freie Universit\"{a}t Berlin, Berlin, 14195, Germany}
\affiliation{Helmholtz-Zentrum Berlin f{\"u}r Materialien und Energie, 14109 Berlin, Germany}
\affiliation{Fraunhofer Heinrich Hertz Institute, 10587 Berlin, Germany}

\begin{abstract}
Saddle points constitute a crucial challenge for first-order gradient descent algorithms. In notions of classical machine learning, they are avoided for example by means of stochastic gradient descent methods. In this work, we provide evidence that the saddle points problem can be naturally avoided in variational quantum algorithms by exploiting the presence of stochasticity. \newtext{We prove convergence guarantees and  present practical examples in numerical simulations and on quantum hardware.} We argue that the natural stochasticity of variational algorithms can be beneficial for avoiding strict saddle points, i.e., those saddle points with at least one negative Hessian eigenvalue. This insight that some levels of shot noise could help is expected to add a new perspective to notions of near-term variational quantum algorithms.
\end{abstract}

\maketitle

Quantum computing has for many years been a hugely inspiring
theoretical idea. Already in the 1980ies it was suggested that
quantum devices could possibly have superior computational
capabilities over computers operating based on classical laws
\cite{Feynman-1986,QuantumTuring}.
It is a relatively recent development that devices have been devised
that may indeed have computational capabilities beyond 
classical means \cite{GoogleSupremacy,Boixo,SupremacyReview,jurcevic_demonstration_2021}. 
These devices are going substantially beyond what was possible not
long ago. And still, they are unavoidably 
noisy and imperfect for many years to come. \newtext{The quantum devices that are available today and presumably will be in the near future are often conceived
as hybrid quantum devices running  
variational quantum algorithms \cite{Cerezo_2021}, 
where a quantum circuit is addressed
by a substantially larger surrounding classical circuit.} This classical
circuit takes measurements from the quantum device and appropriately
varies variational parameters of the quantum device in an update.
Large classes of \emph{variational quantum eigensolvers} (VQE), the \emph{quantum
approximate optimisation algorithm} (QAOA) and models for quantum-assisted machine learning are thought to operate along those lines,
based on suitable \emph{loss functions} to be minimised
\cite{Peruzzo,Kandala,McClean_2016,QAOA,Lukin,bharti_2021_noisy,VariationalReview}. 
\newtext{In fact, 
many near-term quantum algorithms} in the era of 
\emph{noisy intermediate-scale quantum} (NISQ) computing
\cite{preskill_quantum_2018}
\newtext{belong to the class of}
variational quantum algorithms.  
While this is an exciting development, it puts a lot of burden to
understanding how reasonable and practical classical control can be conceived.

\newtext{Generally, 
when the optimisation space is high-dimensional updates of the variational parameters are done via
\emph{gradient 
evaluations} \cite{PhysRevA.99.032331,bergholm2018pennylane,pennylane,Gradients}, while zeroth-order and second-order methods are, in principle, also applicable, but typically only up to a limited number of parameters}. This makes a lot of sense, as one
may think that going downhill in a variational quantum algorithms is a 
good idea. That said, the concomitant classical optimisation 
problems are \newtext{generally not convex optimisation problems and 
the variational landscapes are} marred by 
\je{globally suboptimal}
local optima and saddle points.
\newnewtext{This becomes particularly prominent when the the search space dimension is high, which often leads to most stationary points---points where the gradient vanishes---being saddle points~\cite{bray_statistics_2007}. This is, however, precisely the overparametrized regime where one expects variational quantum algorithms to perform well.}
In fact, it is known that the problems of optimising variational parameters of quantum circuits
\newje{are}
computationally hard in worst case complexity \cite{Bittel}.
\newtext{While this is not of too much concern in practical considerations
(since it is often sufficient to find a ``good'' local minimum instead of the \emph{global} minimum)
and resembles an analogous situation in classical machine learning},
it does point to the fact that one should expect
a rugged optimisation landscape, featuring
different local minima 
as well as saddle points. \newtext{Although in the infinite time limit, the first-order algorithm might avoid saddle points eventually with high probability \cite{lee2016gradient}, but it is shown that in the practical time scale saddle points matter significantly in the general settings of first-order optimization algorithms \cite{du2017gradient}. }

Such saddle points can indeed
be a burden to feasible and practical classical optimisation 
of variational quantum algorithms.

\begin{figure}[h]
\centering
\includegraphics[width=0.46\textwidth]{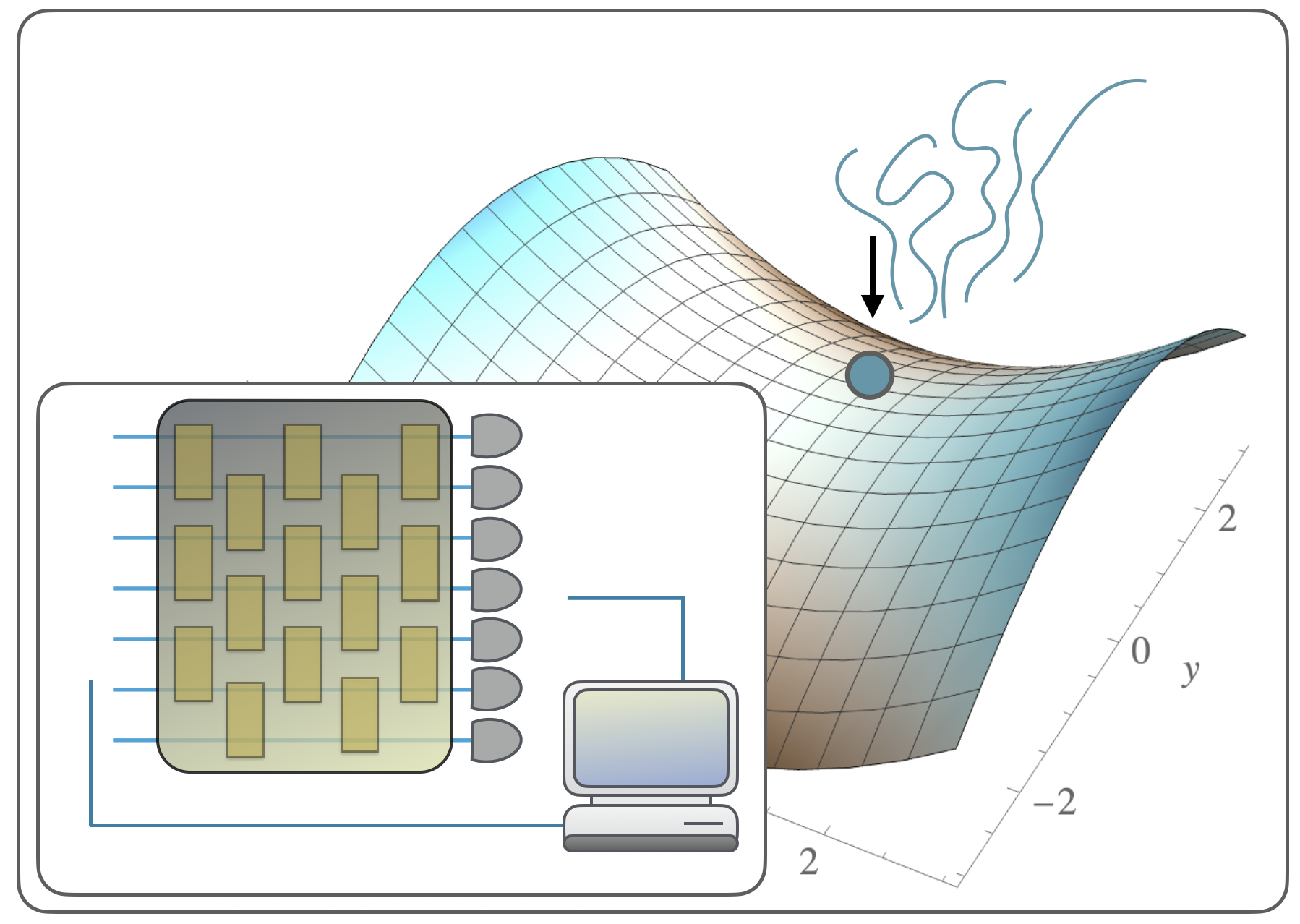}
\caption{Stochasticity in variational quantum algorithms can help in avoiding (strict) saddle points.}
\label{fig:sampling}
\end{figure}
In this work, we establish the notion that 
in such situations, \emph{small noise levels} can actually be of substantial help. 
More precisely, we show that
some levels of \emph{statistical noise} (specifically
the kind of
noise that naturally arises from a finite number of measurements to estimate quantum expectation values) can even be 
beneficial. We get inspiration from and build on a powerful
mathematical theory in \emph{classical machine learning}: 
there, theorems have been established 
that say that ``noise'' can help gradient-descent optimisation not to get stuck at saddle points \cite{JordanSaddlePoints,jain2017non}. 
Building on such ideas we show that they can be adapted and developed to be applicable to
the variational quantum algorithms setting.
Then we argue that the ``natural'' statistical noise of a quantum experiment can play the role of the artificial noise that is inserted by hand in classical machine learning algorithms to avoid saddle points. We maintain the 
\je{precise and} rigorous mindset of Ref.~\cite{JordanSaddlePoints}, but show that the findings
have practical importance and can be made concrete use of 
when running variational quantum algorithms on near-term quantum devices.
In previous studies it has been observed anecdotally that small levels of noise can indeed be helpful for improving the optimisation procedure~\cite{duffield_bayesian_2022, Gradients,Jansen,Rosenkranz,Lorenz, Yelin2, gu_adaptive_2021}.
What is more, 
variational algorithms have been seen as being noise-robust in a
sense \cite{PhysRevA.102.052414}.
\newtext{That said,
while in the past rigorous convergence guarantees have been formulated for convex loss functions of VQAs~\cite{Gradients, gu_adaptive_2021}, in this work we focus on the non-convex scenario, where saddle points and local minima are present. Such
a systematic and rigorous analysis of the type we have conducted that explains the origin of the phenomenon of noise facilitating optimisation has been lacking.}

It is important to stress that for noise we refer to \je{in our theorems is} the type of noise that adds stochasticity to the gradient estimations, such as the use of a finite number of measurements or the zero-average fluctuations that are involved in real experiments. \newtext{ Also, instances of global depolarising noise
are covered as discussed in Appendix 2}.
Thus, in this case,
noise does not mean the generic quantum noise that results from the interaction with the environment characterised by 
\je{\emph{completely positive and trace-preserving}} (CPTP)
maps, which can be substantially detrimental to the 
\newje{performance of the} algorithm \cite{DePalmaVQAs, Stilck_Fran_a_2021}. \newtext{In addition, it has been shown that noisy CPTP maps in the circuit may significantly worsen 
the problem of \emph{barren plateaus} \cite{NoisyBP_Wang_2021,McClean_2018} which is one of the main obstacles to the scalability of \emph{variational quantum algorithms} (VQAs)}.

We perform numerical experiments, and we show examples where optimisations with gradient descent without noise get stuck at saddle points, whereas if we add some noise, we can escape this problem and get to the minimum---convincingly demonstrating the functioning of the approach. We verify the latter not only in a numerical simulation, but also making use of data of a real IBM quantum machine.

\subsection*{Preliminaries}

\label{sec:pre}
In our work, we will show how a class of saddle points, the so-called strict saddle points, can be avoided in noisy gradient descent. In developing our machinery, we build strongly on the 
rigorous results laid out in \je{Ref.}~\cite{JordanSaddlePoints}
\je{and uplift them to the quantum setting at hand}.
\newtext{For this, we do method development in its own right.}
First, we introduce some useful definitions and theorems
(\je{see Ref.}~\cite{JordanSaddlePoints} for a more 
in-depth discussion).

Throughout this work, we consider the problem of minimising a function $\mathcal{L}: \mathbb{R}^p \to \mathbb{R}$.
We indicate its gradient 
at $\theta$ as $\partial \mathcal{L} (\theta)$ and its Hessian matrix at point $\theta$
as $\partial^2 \mathcal{L} (\theta)$. 
\newtext{We denote as $\abs{\cdot}_2$ the $l_2$ norm of a vector. $\abs{\cdot}_{HS}$ and $\abs{\cdot}_\infty$} denote respectively the Hilbert-Schmidt norm and the largest eigenvalue norm of a matrix. We denote as $\lambda_{\min}(\cdot)$ the minimum eigenvalue of a matrix.

\begin{definition}[$L$-Lipschitz function]
A function 
$g: \mathbb{R}^p \to \mathbb{R}^d$ is $L$-Lipschitz if and only if
\begin{align}
    \abs{g (\theta)- g (\phi)}_2\le L\abs{\theta - \phi}_2
\end{align}
for every $\theta$ and $\phi$. 
\end{definition}

\begin{definition}[$\beta$-strong smoothness]
\label{def:Lip}
A differentiable function $\mathcal{L}: \mathbb{R}^p \to \mathbb{R}$ is called $\beta$-strongly smooth if and only if its gradient is a $\beta$-Lipschitz function, i.e.,
\begin{align}
    \abs{\partial \mathcal{L} (\theta)-\partial \mathcal{L} (\phi)}_2\le \beta\abs{\theta - \phi}_2~,
\end{align}
for every $\theta$ and $\phi$. 
\end{definition}

\begin{definition}[Stationary point]
    If $\mathcal{L}$ is differentiable, ${\theta^*}$ is defined a stationary point 
    if
    \begin{equation}
        \abs{\partial \mathcal{L}({\theta^*})}_2=0.
    \end{equation}
\end{definition}

\begin{definition}[$\epsilon$-approximate stationary point]
    If $\mathcal{L}$ is 
    differentiable, ${\theta^*}$ is defined an $\epsilon$-approximate stationary point if
    \begin{equation}
        \abs{\partial \mathcal{L}({\theta^*})}_2\le \epsilon.
    \end{equation}
\end{definition}

\begin{definition}[Local minimum, local maximum, and saddle point]
    If $\mathcal{L}$ is differentiable, a stationary point ${\theta^*}$ is a
    \begin{itemize}
        \item \emph{local minimum}, if there 
        exists $\delta>0$ such that $\mathcal{L}({\theta^*}) \leq \mathcal{L}(\theta)$ for any $\theta$ with $\abs{\theta-{\theta^*}}_2 \leq \delta$.
        \item \emph{Local maximum}, 
        if there exists $\delta>0$ such that $\mathcal{L}({\theta^*}) \geq \mathcal{L}(\theta)$ for any $\theta$ with $\abs{\theta-{\theta^*}}_2 \leq \delta$.
        \item \emph{Saddle point}, otherwise.
    \end{itemize}

\end{definition}

\begin{definition}[$\rho$-Lipschitz Hessian]
\label{def:LipHes}
A twice differentiable function $\mathcal{L}$ has $\rho$-Lipschitz Hessian matrix $\partial^2 \mathcal{L}$ if and only if
\begin{align}
    \abs{\partial^2 \mathcal{L} (\theta)-\partial^2 \mathcal{L} (\phi)}_{\operatorname{HS}}\le \rho\abs{\theta - \phi}_2
\end{align}
for every $\theta$ and $\phi$ (where $\abs{\cdot}_{\operatorname{HS}}$ is the Hilbert-Schmidt norm). 
\end{definition}

\begin{definition}[Gradient descent]
    Given a differentiable function $\mathcal{L}$, the gradient descent algorithm is defined by the update rule
    \begin{align}
\theta_{i}^{t+1}=\theta_{i}^{t}-\eta {{\partial }_{i}}\mathcal{L} (\theta^t)\je{,}
\end{align}
where $\eta>0$ is called \emph{learning rate}.
\end{definition}
The convergence time of the gradient descent algorithm is given by the following theorem \cite{JordanSaddlePoints}.

\begin{theorem}[Gradient descent complexity]
Given a $\beta$-strongly smooth function $\mathcal{L}(\cdot)$, for any $\epsilon>0$, if we set the learning rate as $\eta=1 / \beta$, then the number of iterations required by the gradient descent algorithm such that it will visit an $\epsilon$-approximate stationary point is
$$
\mathcal{O}\left({\frac{\beta\left(\mathcal{L}\left(\mathbf{\theta}_{0}\right)-\mathcal{L}^{\star}\right)}{\epsilon^{2}}}\right),
$$
where $\mathbf{\theta}_{0}$ is the initial point and $\mathcal{L}^\star$ is the value of $\mathcal{L}$ computed in the global minimum.
\end{theorem}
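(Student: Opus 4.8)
The plan is to derive the standard \emph{descent lemma} from the $\beta$-strong smoothness hypothesis and then telescope the guaranteed per-step decrease of $\mathcal{L}$. First I would use that $\beta$-strong smoothness (Definition~\ref{def:Lip}) implies the quadratic upper bound
\begin{equation}
    \mathcal{L}(\phi) \le \mathcal{L}(\theta) + \langle \partial \mathcal{L}(\theta), \phi - \theta\rangle + \frac{\beta}{2}\abs{\phi - \theta}_2^2
\end{equation}
valid for all $\theta,\phi$. This follows by writing $\mathcal{L}(\phi)-\mathcal{L}(\theta)=\int_0^1 \langle \partial \mathcal{L}(\theta + s(\phi-\theta)),\,\phi-\theta\rangle\,ds$ and bounding the integrand with the Cauchy--Schwarz inequality together with the $\beta$-Lipschitz property of the gradient.

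Next I would specialize this inequality to two consecutive gradient-descent iterates. Inserting the update rule $\theta^{t+1}=\theta^t-\eta\,\partial \mathcal{L}(\theta^t)$, so that $\phi-\theta=-\eta\,\partial \mathcal{L}(\theta^t)$, the descent lemma gives
\begin{equation}
    \mathcal{L}(\theta^{t+1}) \le \mathcal{L}(\theta^t) - \eta\left(1 - \frac{\beta\eta}{2}\right)\abs{\partial \mathcal{L}(\theta^t)}_2^2 .
\end{equation}
Choosing the prescribed learning rate $\eta = 1/\beta$ collapses the prefactor to $1/(2\beta)$, which yields the per-step guarantee
\begin{equation}
    \mathcal{L}(\theta^t) - \mathcal{L}(\theta^{t+1}) \ge \frac{1}{2\beta}\abs{\partial \mathcal{L}(\theta^t)}_2^2 .
\end{equation}

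Finally I would argue on the iteration count by contradiction. Suppose none of the first $T$ iterates is an $\epsilon$-approximate stationary point, i.e.\ $\abs{\partial \mathcal{L}(\theta^t)}_2 > \epsilon$ for all $t<T$. Summing the per-step bound telescopes the left-hand side to $\mathcal{L}(\theta_0)-\mathcal{L}(\theta^T)$, and since $\mathcal{L}(\theta^T)\ge \mathcal{L}^\star$ one obtains $\mathcal{L}(\theta_0)-\mathcal{L}^\star > T\epsilon^2/(2\beta)$, hence $T < 2\beta(\mathcal{L}(\theta_0)-\mathcal{L}^\star)/\epsilon^2$. Consequently an $\epsilon$-approximate stationary point must be visited within $\mathcal{O}\!\left(\beta(\mathcal{L}(\theta_0)-\mathcal{L}^\star)/\epsilon^2\right)$ steps. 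I expect the only delicate point to be the passage from the first-order Lipschitz-gradient condition to the quadratic majorant in the descent lemma; once that is in hand, the remaining telescoping and contradiction argument is entirely elementary, so the main care lies in applying the integral representation and the Cauchy--Schwarz estimate correctly.
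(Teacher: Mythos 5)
Your proof is correct and is the standard descent-lemma argument: the quadratic majorant from $\beta$-smoothness, the per-step decrease of $\tfrac{1}{2\beta}\abs{\partial\mathcal{L}(\theta^t)}_2^2$ at $\eta=1/\beta$, and the telescoping/contradiction bound $T<2\beta(\mathcal{L}(\theta_0)-\mathcal{L}^\star)/\epsilon^2$. The paper itself gives no proof of this statement --- it imports the theorem from Ref.~\cite{JordanSaddlePoints} --- and your argument matches the proof given there, so there is nothing to correct.
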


It is important to note that this result does not depend on the number of free parameters. Also, the stationary point at which the algorithm will converge is not necessarily a local minimum, but can also be a saddle point.
Note that a generic saddle point satisfies $\lambda_{\min}(\partial^2 \mathcal{L} (\theta_s))\le0$ where $\lambda_{\min}(\cdot)$ is the minimum eigenvalue. Now we define a subclass of saddle points.
\begin{definition}[Strict saddle point]
$\theta_s$ is a 
\emph{strict saddle} point for a twice differentiable function $\mathcal{L}$ if and only if $\theta_s$ is a stationary point and if the minimum eigenvalue of the Hessian is $\lambda_{\min}(\partial^2 \mathcal{L} (\theta_s))<0$.
\end{definition}
Adding the \emph{strict} condition, we remove the case in which a saddle point satisfies $\lambda_{\min}(\partial^2 \mathcal{L} (\theta_s))=0$.
\newtext{Moreover, note that a local maximum respects our definition of strict saddle point.} 
Analogously to Ref.~\cite{JordanSaddlePoints}, in this 
\je{work,}
we focus on avoiding strict saddle points. Hence, it is useful to introduce the following definition.

\begin{definition}[Second-order stationary point]
    Given a twice differentiable function $\mathcal{L}(\cdot), \mathbf{\theta^*}$ is a second-order stationary point if and only if
\begin{equation}
\partial \mathcal{L}(\mathbf{\theta^*})=\mathbf{0}, \quad \text { and } \quad \lambda_{\min}(\partial^{2} \mathcal{L}(\mathbf{\theta^*})) \ge 0.
\end{equation}
\end{definition}

\begin{definition}[$\epsilon$-second-order stationary point]
    For a $\rho$-Hessian Lipschitz function $\mathcal{L}(\cdot), \mathbf{\theta^*}$ is an $\epsilon$-second-order stationary point if
\begin{equation}
\norm{\partial \mathcal{L}(\mathbf{\theta^*})}_2 \leq \epsilon \quad \text { and } \quad \lambda_{\min}(\partial^{2} \mathcal{L}(\mathbf{\theta^*})) \ge -\sqrt{\rho \epsilon} .
\end{equation}
\end{definition}

Gradient descent makes a non-zero step only when the gradient is non-zero, and thus in the non-convex setting it will be stuck at saddle points. 
A simple variant of GD is the \emph{perturbed gradient descent} (PGD)
method \cite{JordanSaddlePoints} which adds randomness to the iterates at each step.

\begin{definition}[Perturbed gradient descent]
    Given a differentiable function $\mathcal{L} : \mathbb{R}^p \to \mathbb{R}$, the perturbed gradient descent algorithm is defined by the update rule
    \begin{align}
\theta_{i}^{t+1}=\theta_{i}^{t}-\eta \left({{\partial }_{i}}\mathcal{L}\left( { {\theta}^{t}} \right)+{{\zeta }^{t}}\right)~\je{,}
\end{align}
where $\eta>0$ is the learning rate and ${{\zeta }^{t}}$ is a normally distributed random variable with mean $\mu=0$ and variance $\sigma^{2}={r^{2}}/{p}$ with $r \in \mathbb{R}$.
\label{PerturbGF}
\end{definition}
In Ref.~\cite{JordanSaddlePoints}, 
\je{the authors} show that if we pick $r=\tilde{\Theta}(\epsilon)$, PGD will find an $\epsilon$-second-order stationary point in a number of iterations that has only a poly-logarithmic dependence on the number of free parameters, i.e.\je{,} 
it has the same complexity of (standard) gradient descent up to poly-logarithmic dependence.

\begin{theorem}[\cite{JordanSaddlePoints}]
Let the function $\mathcal{L}: \mathbb{R}^d \to \mathbb{R}$ be $\beta$ strongly smooth and such that it has a $\rho$ Lipschitz-Hessian. Then, for any $\epsilon, \delta>0$, the $P G D$ algorithm starting at the point $\theta_{0}$, with parameters $\eta=\tilde{\Theta}(1 / \beta)$ and $r=\tilde{\Theta}(\epsilon)$, will visit an $\epsilon-$second-order stationary point at least once in the following number of iterations, with probability at least $1-\delta$
$$
\tilde{\mathcal{O}}\left(\frac{\beta\left(\mathcal{L}\left(\theta_{0}\right)-\mathcal{L}^{\star}\right)}{\epsilon^{2}}\right)
$$
where $\tilde{\mathcal{O}}$ and $\tilde{\Theta}$ hide poly-logarithmic factors in $p, \beta, \rho, 1 / \epsilon, 1 / \delta$ and $\Delta_{\mathcal{L}}:=\mathcal{L}\left(\theta_{0}\right)-\mathcal{L}^{\star}$. $\mathbf{\theta}_{0}$ is the initial point and $\mathcal{L}^\star$ is the value of $\mathcal{L}$ computed in the global minimum.
\label{GaussianPGD}
\end{theorem}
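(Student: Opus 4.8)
The plan is to follow the two-regime strategy underlying Ref.~\cite{JordanSaddlePoints}: at any iteration the current point $\theta^t$ either has a large gradient, in which case an ordinary gradient step already decreases $\mathcal{L}$ appreciably, or it sits near an approximate saddle, in which case the injected Gaussian noise lets the subsequent iterates escape along the negative-curvature direction. The theorem then reduces to bookkeeping: each unit of progress costs a bounded number of iterations, and the total progress cannot exceed $\Delta_{\mathcal{L}}:=\mathcal{L}(\theta_0)-\mathcal{L}^\star$.

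First I would establish a descent lemma. By $\beta$-strong smoothness (Definition~\ref{def:Lip}) and the choice $\eta=\tilde{\Theta}(1/\beta)$, a PGD step obeys
\begin{equation}
\mathcal{L}(\theta^{t+1}) \le \mathcal{L}(\theta^t) - \frac{\eta}{2}|\partial\mathcal{L}(\theta^t)|_2^2 + (\text{noise terms}),
\end{equation}
so whenever $|\partial\mathcal{L}(\theta^t)|_2 > \epsilon$ one gains $\Omega(\eta\epsilon^2)$ per step; the perturbation scale $\sigma^2=r^2/p$ with $r=\tilde{\Theta}(\epsilon)$ from Definition~\ref{PerturbGF} is tuned precisely so that the stochastic contributions are a lower-order correction. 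Summing over iterations and using $\mathcal{L}\ge\mathcal{L}^\star$ bounds the number of large-gradient steps by $\tilde{\mathcal{O}}(\beta\Delta_{\mathcal{L}}/\epsilon^2)$, already matching the claimed complexity. It then remains to control the iterations spent at points with small gradient but negative curvature, i.e.\ those violating the $\epsilon$-second-order stationary condition only through $\lambda_{\min}(\partial^2\mathcal{L})<-\sqrt{\rho\epsilon}$.

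The core of the argument, and the step I expect to be the main obstacle, is the \emph{escape lemma}: starting from an approximate saddle with $\lambda_{\min}(\partial^2\mathcal{L})\le-\sqrt{\rho\epsilon}$, the iterates decrease $\mathcal{L}$ by a fixed amount $\mathcal{F}=\tilde{\Omega}(\sqrt{\epsilon^3/\rho})$ within $\mathcal{T}=\tilde{\mathcal{O}}(1/(\eta\sqrt{\rho\epsilon}))$ steps, with high probability. I would prove it by linearizing around the saddle $\tilde\theta$: writing $H=\partial^2\mathcal{L}(\tilde\theta)$, a step behaves to leading order as $\theta^{t+1}-\tilde\theta\approx(\id-\eta H)(\theta^t-\tilde\theta)$, so the component along the most negative eigenvector $e_1$ (eigenvalue $-\gamma$, $\gamma\ge\sqrt{\rho\epsilon}$) is amplified by $(1+\eta\gamma)>1$ and grows exponentially. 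The delicate part is bounding the deviation of the true trajectory from this quadratic model using \emph{only} the $\rho$-Lipschitz-Hessian property (Definition~\ref{def:LipHes}), via an \enquote{improve-or-localize} estimate that confines non-escaping iterates to a small ball where the quadratic approximation stays valid, and then showing the correction does not overwhelm the exponential growth before the iterate has travelled far enough to force the function-value drop.

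To upgrade the deterministic escape picture to a high-probability statement I would use the coupling (\enquote{width}) argument of Ref.~\cite{JordanSaddlePoints}. Consider two trajectories whose initial perturbations differ only by a displacement $\mu\,r\,e_1$ along $e_1$; subtracting their update rules and again invoking the Hessian-Lipschitz bound shows their separation grows like $(1+\eta\gamma)^t$, so they cannot \emph{both} remain trapped. Hence the \enquote{stuck region} of perturbations from which escape fails lies in a thin slab transverse to $e_1$, and comparing the Gaussian measure of this slab to that of the full perturbation ball bounds the per-episode failure probability. Finally I would assemble the pieces: there are at most $\Delta_{\mathcal{L}}/\mathcal{F}=\tilde{\mathcal{O}}(\beta\Delta_{\mathcal{L}}/\epsilon^2)$ escape episodes, each of length $\mathcal{T}$; a union bound over episodes keeps the total failure probability below $\delta$ (contributing the $\log(1/\delta)$ factors); and $(\Delta_{\mathcal{L}}/\mathcal{F})\cdot\mathcal{T}$ together with the large-gradient count both evaluate to $\tilde{\mathcal{O}}(\beta\Delta_{\mathcal{L}}/\epsilon^2)$, with all hidden factors poly-logarithmic in $p,\beta,\rho,1/\epsilon,1/\delta$ and $\Delta_{\mathcal{L}}$.
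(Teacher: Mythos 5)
The paper does not prove this theorem itself: it is imported verbatim from Ref.~\cite{JordanSaddlePoints}, so there is no in-paper argument to compare against. Your sketch faithfully reproduces the strategy of that cited reference --- the descent lemma for large-gradient steps, the escape lemma with $\mathcal{F}=\tilde{\Omega}(\sqrt{\epsilon^3/\rho})$ and $\mathcal{T}=\tilde{\mathcal{O}}(1/(\eta\sqrt{\rho\epsilon}))$, the improve-or-localize estimate, and the coupling (``thin stuck slab'') argument --- and the final bookkeeping correctly recovers $\tilde{\mathcal{O}}(\beta\Delta_{\mathcal{L}}/\epsilon^2)$, so your proposal is consistent with the source the authors rely on.
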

This theorem \je{has been} proven in \je{Ref.~}\cite{JordanSaddlePoints} for Gaussian distributions, but the authors have pointed out that this is not strictly necessary and that it can be generalized to other types of probability distributions in which appropriate concentration inequalities can be applied \newtext{(for a more in-depth discussion, see 
Ref.~\cite{JordanSaddlePoints})}. 

In Ref.~\cite{ExpTimeSaddle}, 
it has 
been shown 
that although the standard GD (without perturbations) almost always escapes the saddle points asymptotically \cite{Lee2017}, there are (non-pathological) cases in which the optimisation requires exponential time to escape. 
This highlights the importance of using gradient descent with perturbations.

 \begin{figure}[h]
\centering
\includegraphics[width=0.46\textwidth]{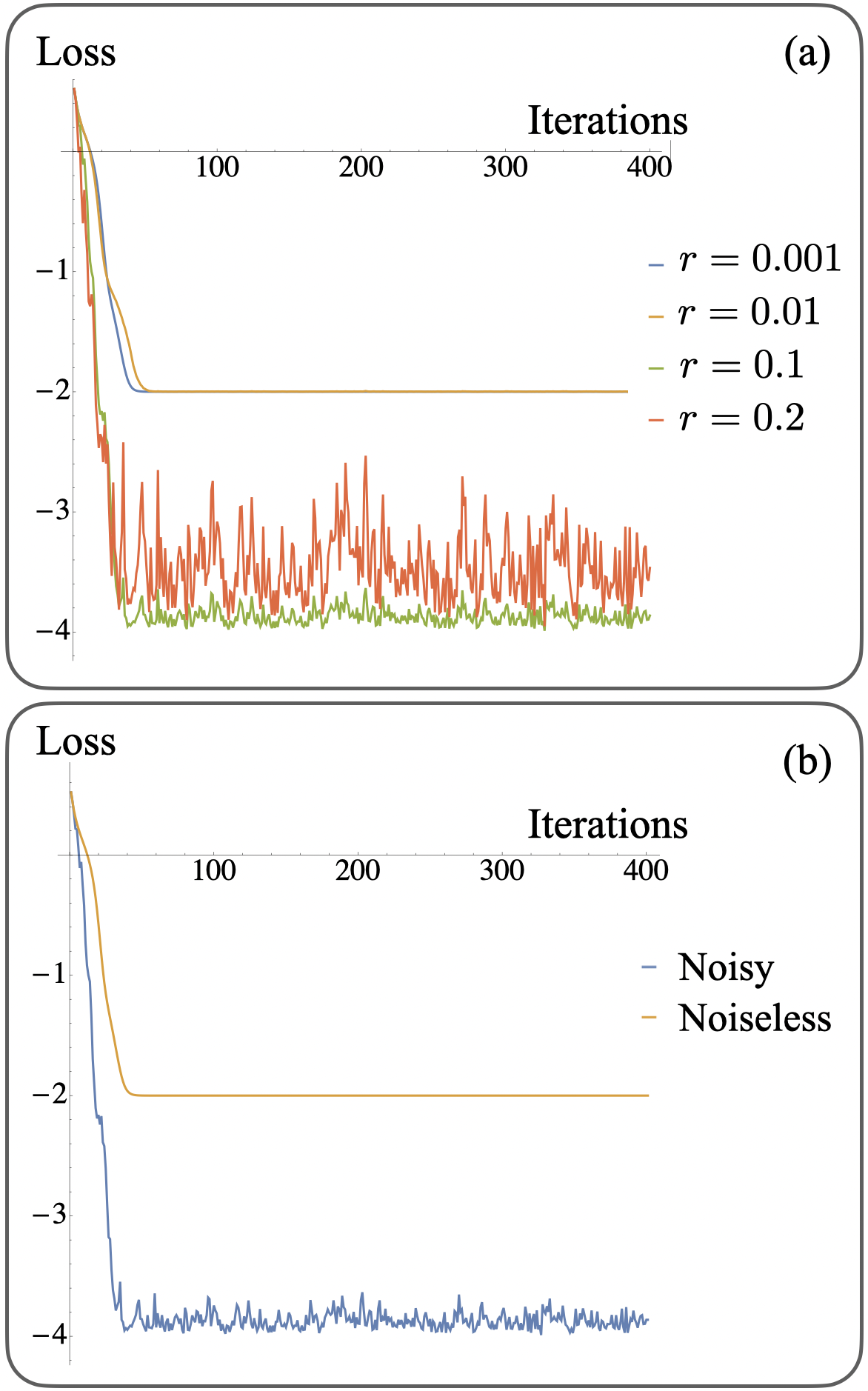}
\caption{\newtext{Comparison of the loss evolution with or without noise. The noise levels are manually-added Gaussian distributions, and we keep the same initial conditions. (a) Four different values of the {standard deviation $r$}. (b) Noiseless case and the noisy case with the standard deviation of the noise $r=0.1$.} 
}
\label{fig:fake}
\end{figure}

\subsection*{Statistical noise in variational quantum algorithms}

Our analysis focuses on variational quantum algorithms in which the loss function to be minimised has the following form
\begin{align}
    \label{eq:loss}
    \newtext{\mathcal{L}\left(\theta\right):=\bra{0} U^{\dagger}(\theta)OU(\theta )\ket{0}}
\end{align} where $O$ is a Hermitian operator and $U(\theta)$ is a parameterised unitary of the form 
\begin{align}
    \newtext{U({\theta}):=\prod_{\ell=1}^{p} W_{\ell} \exp \left(i \theta_{\ell} X_{\ell}\right)},
\end{align}
\newtext{where $W_{\ell}$ and $X_{\ell}$ are,  respectively,  fixed unitaries and Hermitian operators.}
Theorem \ref{GaussianPGD} above assumes that the loss function to minimise is $\beta$-strongly smooth and has a $\rho$-Lipschitz Hessian. \newtext{To guarantee that these conditions are met for loss functions of parametrised 
quantum circuits, we provide the following theorem.}

\begin{theorem}[\newnewtext{Conditions for loss functions of parametrised quantum circuits}]
\label{thm:smoothness}
\newtext{The loss function given in Eq.~(\ref{eq:loss}) with $\theta$ being a $p$-dimensional vector is $\beta$-strongly smooth and has a $\rho$-Lipschitz Hessian.
In particular, we have
\begin{align}
    \label{eq:betabound}
    &\beta \le 2^2p \Vert O\Vert_\infty\max_{j=1,\dots,p}\Vert X_{j}\Vert_\infty^2,\\
    \label{eq:rhobound}
    &\rho \le 2^3p^\frac{3}{2} \Vert O\Vert_\infty\max_{j=1,\dots,p}\Vert X_{j}\Vert_\infty^3.
\end{align}}
\end{theorem}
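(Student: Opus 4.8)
The plan is to reduce both assertions to pointwise bounds on the derivative tensors of $\mathcal{L}$, and then to control those tensors using the product structure of $U(\theta)$. For a twice-differentiable loss, the fundamental theorem of calculus gives $\partial\mathcal{L}(\theta)-\partial\mathcal{L}(\phi)=\int_0^1 \partial^2\mathcal{L}(\phi+t(\theta-\phi))(\theta-\phi)\,dt$, so $\beta$-strong smoothness holds with $\beta=\sup_\xi\Vert\partial^2\mathcal{L}(\xi)\Vert_\infty$, the supremum of the Hessian's operator norm. Writing $T_{jkl}:=\partial_j\partial_k\partial_l\mathcal{L}$ for the third-derivative tensor and integrating the Hessian along the segment from $\phi$ to $\theta$, one similarly gets $\Vert\partial^2\mathcal{L}(\theta)-\partial^2\mathcal{L}(\phi)\Vert_{HS}\le\sup_\xi\Vert T(\xi)\Vert_{HS}\Vert\theta-\phi\Vert_2$, where a Cauchy-Schwarz step bounds the directional derivative $\sum_l v_l T_{jkl}$ of the Hessian by $\Vert T\Vert_{HS}=(\sum_{jkl}T_{jkl}^2)^{1/2}$ for unit $v$. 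Hence it suffices to take $\rho=\sup_\xi\Vert T(\xi)\Vert_{HS}$.

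First I would compute the derivatives explicitly. Differentiating the factor $\exp(i\theta_\ell X_\ell)$ of $U(\theta)$ with respect to $\theta_\ell$ inserts a single $iX_\ell$ at its position and leaves the remaining product of unitaries intact, so $\Vert\partial_\ell U\Vert_\infty\le\Vert X_\ell\Vert_\infty\le M$ with $M:=\max_j\Vert X_j\Vert_\infty$, while $\Vert U\Vert_\infty=1$. More generally, any mixed derivative inserting $n$ factors satisfies $\Vert\partial_{a_1}\cdots\partial_{a_n}U\Vert_\infty\le M^n$. Applying the Leibniz rule to $\mathcal{L}=\bra{0}U^\dagger O U\ket{0}$, the Hessian becomes a sum of $2^2=4$ terms and the third derivative a sum of $2^3=8$ terms, each of the form $\bra{0}(D_A U^\dagger)\,O\,(D_B U)\ket{0}$, where the derivative indices are split between $U^\dagger$ and $U$ with $|A|+|B|$ equal to $2$ or $3$ respectively.

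Then each matrix element is controlled by $|\bra{0}A\ket{0}|\le\Vert A\Vert_\infty$ together with submultiplicativity of the operator norm: the inserted factors give $\Vert(D_A U^\dagger)O(D_B U)\Vert_\infty\le M^{|A|}\Vert O\Vert_\infty M^{|B|}$, so every Hessian entry is at most $2^2 M^2\Vert O\Vert_\infty$ and every third-derivative entry at most $2^3 M^3\Vert O\Vert_\infty$. Converting these entrywise bounds into the norms identified above completes the argument: a symmetric $p\times p$ matrix with entries bounded by $C$ has operator norm at most $pC$ (via Gershgorin, or $\Vert\cdot\Vert_\infty\le\Vert\cdot\Vert_{HS}$), giving $\beta\le 2^2 p\,\Vert O\Vert_\infty M^2$; and a $p\times p\times p$ tensor with entries bounded by $C$ has Frobenius norm at most $p^{3/2}C$, giving $\rho\le 2^3 p^{3/2}\Vert O\Vert_\infty M^3$, which are exactly equations~(\ref{eq:betabound}) and~(\ref{eq:rhobound}).

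The Leibniz expansion and the operator-norm estimates are routine; note in particular that the prefactors $2^2$ and $2^3$ simply count the Leibniz terms. The step demanding the most care is the conversion between tensor bounds and the matrix norms that actually define $\beta$ and $\rho$, since this is where the powers $p$ and $p^{3/2}$ are generated and where the matrix norm must be chosen consistently with Theorem~\ref{GaussianPGD} (operator norm for the gradient's Lipschitz constant, Hilbert-Schmidt norm for the Hessian's). A minor subtlety is the repeated-index case, where two derivatives act on the same exponential and produce $X_\ell^2$; this does not spoil the bound $M^n$ since $\Vert X_\ell^2\Vert_\infty=\Vert X_\ell\Vert_\infty^2\le M^2$.
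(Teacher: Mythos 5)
Your proposal is correct and follows essentially the same route as the paper's proof: both reduce $\beta$ and $\rho$ to uniform entrywise bounds on the second- and third-derivative tensors, obtain the bound $2^k\Vert O\Vert_\infty\max_j\Vert X_j\Vert_\infty^k$ on those entries via the Leibniz rule together with submultiplicativity of the operator norm and $\Vert \partial^{\gamma}U\Vert_\infty\le \max_j\Vert X_j\Vert_\infty^{|\gamma|}$, and then pick up the factors $p$ and $p^{3/2}$ by converting entrywise bounds into the relevant matrix and tensor norms. The only cosmetic difference is that the paper packages this last conversion into general lemmas on Lipschitz constants of vector-valued functions (quoted from Ref.~\cite{WildeSDP}) applied to $\partial\mathcal{L}$ and $\operatorname{vec}(\partial^2\mathcal{L})$, whereas you argue directly via the fundamental theorem of calculus and the bound of the spectral norm by the Hilbert--Schmidt norm; the resulting constants are identical.
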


\newtext{We provide a detailed proof in Appendix 1.}
\newtext{It is important to observe that for typical VQAs, the observable $O$ associated to the loss function and the components $X_i$ have an operator norm that grows at most polynomially with the number of qubits, so also $\beta$ and $\rho$ will grow at most polynomially.}
\newtext{This is because the circuit depth $p$ must be chosen to be at most $\mathcal{O}\left({\rm poly}(n)\right)$ for $n$ qubits, $X_l$ as well as $O$ are usually chosen to be Pauli strings in which case their operator norms is $1$ or to be linear combinations of $\mathcal{O}\left({\rm poly}(n)\right)$ many Pauli strings with $\mathcal{O}\left({\rm poly}(n)\right)$ coefficients (as in QAOA), therefore, by the triangle inequality, their operator norm is bounded by $\mathcal{O}\left({\rm poly}(n)\right)$. Sometimes $O$ is also chosen to be a quantum state \cite{Cerezo_2021}, therefore with operator norm bounded by $1$}.
\newtext{Hence, the number of iterations in Theorem \ref{GaussianPGD} does not grow exponentially in the number of qubits.}

\newtext{The previous results can be easily generalized for the case of differentiable and bounded loss functions which are functions of expectation 
values, i.e., 
\begin{align}
    \mathcal{L}(\theta )=f\left( \left\langle 0\left| {{U}^{\dagger }}(\theta )OU(\theta ) \right|0 \right\rangle\right).
\end{align}
In fact, we observe that if $\mathcal{L}$ and $g$ are Lipschitz functions, then
\begin{eqnarray}
    \vert \mathcal{L}(g({\theta}))-\mathcal{L}(g( {\theta^\prime}))\vert
    & \le L_\mathcal{L} \Vert g(\theta)-g(\theta^\prime)\Vert_2 \\
    & \le L_\mathcal{L} L_g \Vert \theta- \theta^\prime\Vert_2.
    \nonumber
\end{eqnarray}
In addition, if $\mathcal{L}$ is a differentiable function with bounded derivatives on a convex set, then (because of the mean value theorem) $\mathcal{L}$ is Lipschitz on this set. From this follows that if $\mathcal{L}$ is a differentiable function with bounded derivatives of a quantum expectation value (whose image defines a bounded $\mathbb{R}$ interval), then it is Lipschitz.
Moreover, the sum of Lipschitz functions is a Lipschitz function. Therefore, functions of expectation values commonly used in machine learning tasks, such as the
\emph{mean squared error}, satisfy the Lipschitz condition.}

Moreover\je{,} Theorem \ref{GaussianPGD} assumes that at each step of the gradient descent a normally distributed random variable is added to the gradient, namely $\theta_{i}^{t+1}=\theta_{i}^{t}-\eta\left( {{\partial }_{i}}\mathcal{L}\left( { {\theta}^{t}} \right)+{{\zeta }^{t}}\right)$.
In VQAs the partial derivatives are \je{commonly} estimated using a finite number of measurements, such as by the so-called \emph{parameter shift rule}~\cite{GradientsQcomp}. 
Here, the update rule for the gradient descent is
\begin{equation}
    \theta_{i}^{t+1}=\theta_{i}^{t}-\eta \hat{g}_i\left( { {\theta}^{t}} \right),
\end{equation}
where $\hat{g}_i\left( { {\theta}^{t}} \right)$ is an estimator of the partial derivative ${{\partial }_{i}}\mathcal{L}\left( { {\theta}^{t}} \right)$ obtained by a finite number of measurements $N_{\rm shots}$ from the quantum device. Moreover, we
define 
\begin{equation}
\hat{\zeta}^t_{N_{\rm shots}} := {{\partial }_{i}}\mathcal{L}\left( { {\theta}^{t}} \right)-\hat{g}_i\left( { {\theta}^{t}} \right).
\end{equation}
Note that $\hat{\zeta}^t_{N_{\rm shots}}$ is a random variable with zero expectation value.
Therefore\je{,} we have
\begin{equation}
\theta_{i}^{t+1}=\theta_{i}^{t}-\eta\left( {{\partial }_{i}}\mathcal{L}\left( { {\theta}^{t}} \right)+ \hat{\zeta}^t_{N_{\rm shots}}\right)  \je{.}
\end{equation}
The ``noise'' $\hat{\zeta}^t_{N_{\rm shots}}$ will play the role of the noise that is added by hand in the perturbed-gradient descent Algorithm \ref{PerturbGF}.
However, we cannot control exactly the distribution of such random variable, nor the variance.
\newtext{However, it is to be expected that in the limit of many measurement shots, by the central limit theorem, the noise encountered in practice will be close to the noise considered here, a Gaussian distribution.}

\subsection*{Numerical and quantum experiments}

In this section\je{,} we discuss the results of numerical and quantum experiments we \je{have} performed to show that stochasticity can help escape saddle points. 
Our results suggest that statistical noise  
\newtext{leads to a non-vanishing} probability of not getting stuck in a saddle point and  
\newtext{thereby reaching} a lower value of the loss function.
\je{These numerical experiments also complement the
rigorous results that are proven to be valid
under very precisely defined conditions, while
the intuition developed here is expected to
be more broadly applicable, so that the 
rigorous results can be seen as proxies for a
more general mindset.} 
We \je{have} also observed this phenomena in a real IBM quantum device. \je{We have done so to convincingly stress the significance of our results
in practice.}

Let us first consider the Hamiltonian $\newtext{O}=\sum_{i=1}^{N=4}Z_i$.
The loss function we consider is defined as the expectation value of such a Hamiltonian over the parametrised quantum circuit $\texttt{qml.StronglyEntanglingLayers}$ implemented
in \emph{Pennylane}~\cite{bergholm2018pennylane}, where two layers of the circuit are used.

In all our experiments we first \newtext{initialise the parameters in multiple randomly chosen values. Next, we select the initial points for which the optimisation process gets stuck at a suboptimal loss-function value, thereby focusing on cases in which saddle points constitute a significant problem for the (noiseless) optimiser.
This selection can be justified by the fact that the loss function defined by $O$ is trivial to begin with. The relevant aspect of this experiment is to study situations in which optimiser encounters saddle points.
As such we exclusively investigate these specifically selected initial points by subsequently initialising the noisy optimiser with them.}

As a proof of principle, we first show the results of an exact simulation (i.e., the expectation values are not estimated using a finite number of shots, but are calculated exactly) in which noise is added manually at each step of the gradient descent. The probability distribution associated to the noise is chosen to be a Gaussian distribution with mean $\mu=0$ and variance $\sigma^2=r^2$.
Figure \ref{fig:fake} shows the difference between the noiseless, and noisy calculations with the same initial conditions of the gradient descent, when the noise is from random Gaussian perturbations added manually. 
Figure \ref{fig:performanceGaussian} shows the performance of the experiment, defined as \je{${1}/({\mathcal{L}-\mathcal{L}_\text{opt}})$} 
as a function of the noise parameter. Here, we can find a critical value of noise, leading to saddle-point avoidance. 
Figure \ref{fig:sampling} specifically addresses quantum noise levels, with simulated results about purely statistical noise levels (shot noise) and device noise (simulated 
by making use of the \newtext{noise model of} actual quantum hardware \texttt{IBM Qiskit}). 

\je{It} should be noted that including device noise generally also means dealing with \emph{completely positive
trace preserving} maps that can lead to a different loss function, with new local minima, new saddle points and a flatter landscape \cite{NoisyBP_Wang_2021}. However, even in this case we observe an improvement in performance using the same initial parameters leading to the saddle point in the noiseless case. \je{This is perfectly in line with the intuition developed here, as long as the effective noise emerging can be seen as a small perturbation of the reference circuit featuring a given loss landscape that is then in effect perturbed by stochastic noise.}

\begin{figure}[h]
\centering
\includegraphics[width=0.46\textwidth]{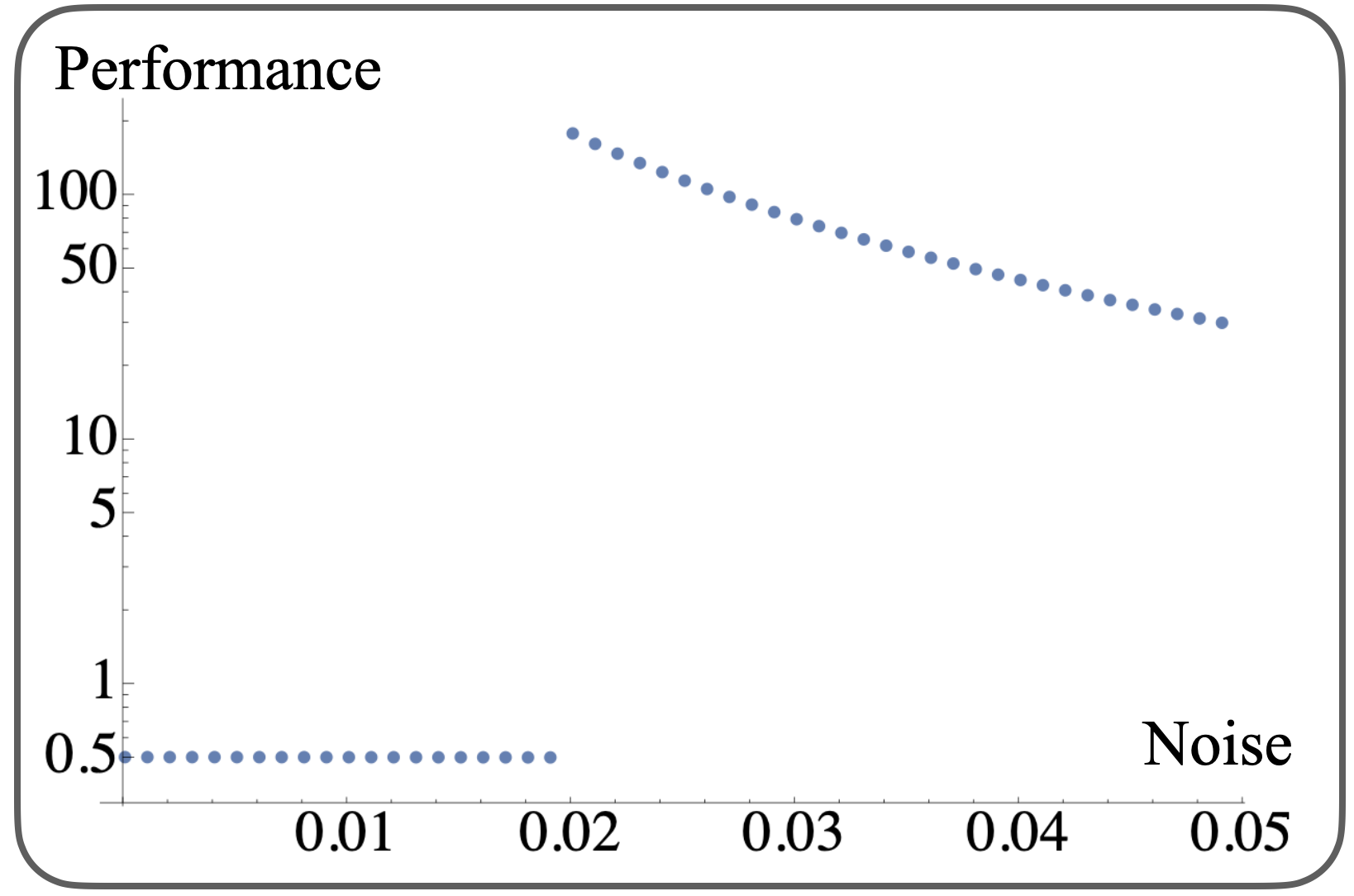}
\caption{We quantify the performance against the size of the noise $r$ (classical Gaussian noise) by ${1}/({\mathcal{L}-\mathcal{L}_\text{opt}})$.} %
\label{fig:performanceGaussian}
\end{figure}

\begin{figure}[h]
\centering
\includegraphics[width=0.47\textwidth]{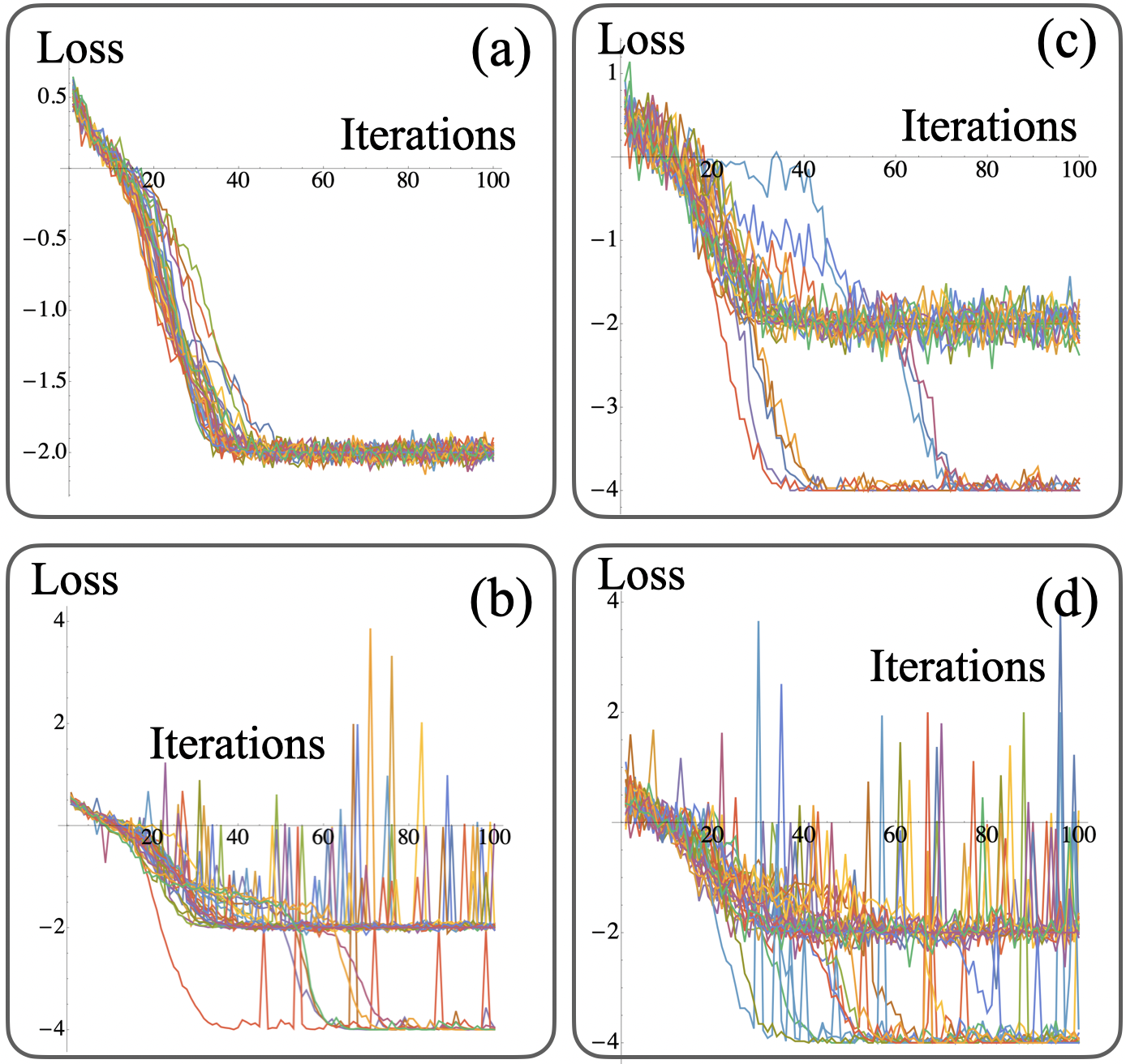}
\caption{\newje{Saddle-point avoidance from quantum noise. We prepare 30 instances starting from the same initial condition. When noise levels are small \newtext{(a), with purely measurement noise, (b), including device noise, and shot number is 1000), most trajectories cannot jump out of the saddle points. When noise levels are larger (c), with purely measurement noise, (d), including device noise, and shot number is 70), we have a probability to jump towards the global minimum.}}}
\label{fig:sampling}
\end{figure}
 
Aside from the quantum machine learning example, we also provide another instance in \emph{variational quantum eigensolvers} (VQEs). Here, we use the Hamiltonian associated to the Hydrogen molecule $\text{H}_2$, which is a 4-qubit Hamiltonian obtained by the fermionic one performing a Jordan-Wigner transformation. We specifically use the same circuit from \texttt{h2.xyz}, the \emph{Hydrogen VQE} example in \emph{Pennylane} \cite{pennylane}. 
Also here, given initial parameters that led to saddle points in the noiseless case, we find that starting by the same parameters and adding noise can lead to saddle-point avoidance. Results are shown in Fig.~\ref{fig:fake_vqe} where we compare the noiseless and noisy simulation. 

\begin{figure}[h]
\centering
\includegraphics[width=0.46\textwidth]{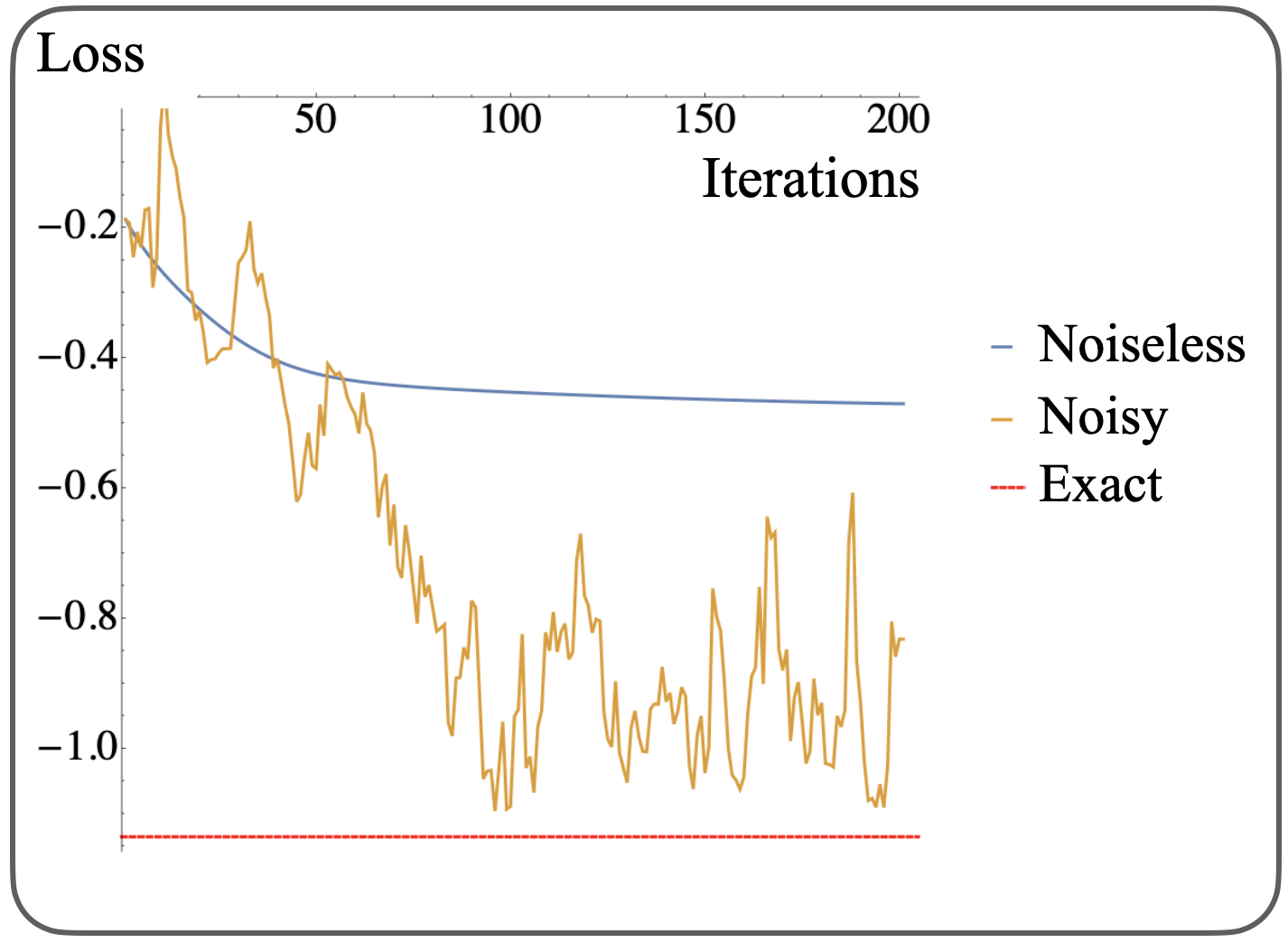}
\caption{Comparison of the loss evolution with or without noise with Hydrogen VQE. The noise is manually drawn from Gaussian distributions with the standard deviation $0.2$, and we keep the same initial conditions. We compare the noiseless case, noisy case and the exact solution.}
\label{fig:fake_vqe}
\end{figure}


\je{To further provide evidence of the functioning of our
approach suggested and the rigorous insights established,
we put the findings into contact of the results of a}
real experiment example in the \texttt{IBM Qiskit} environment. We use the Hamiltonian $\commentFre{O}=\sum_{i=1}^{N=4}Z_i$ that we used in our first numerical simulation with 
\je{four} qubits and \je{two}
layers. We run the experiment using as initial condition the one that lead to a saddle point in the noiseless case. We use the \texttt{IBMQ\_Jakarta} device with 10000 shots. The result in \je{Fig.}~\ref{fig:converge_experiment} shows that it is possible to obtain a lower value of the cost function than that of the simulation without noise that 
\je{has been stuck} in a saddle point. 

\newnewtext{One may also ask at this point 
what the ``sweet spot'' of 
the appropriate stochastic noise might possibly be. It is known that for most local quantum circuits being subject to constant noise levels under general local qubit noise, one can expect the maximum attainable circuit depth to scale like $\mathcal{O}(\log(n))$ \cite{Nonunital}. Also, it is known that Pauli expectation values between two different input states are exponentially suppressed in the circuit depth $d$ \cite{Nonunital}. This implies that the logarithm of the noise levels must be chosen as $\mathcal{O}(1/{\rm poly}(d))$ to be non-detrimental for read-out, but still make sure that one avoids saddle points in variational optimization. Too high noise levels, in the form of contractive device noise, will eventually lead to noise-induced barren plateaus \cite{NoisyBP_Wang_2021} and an eventual disappearance of any distinguishability of outputs.}

\begin{figure}[h]
\centering
\includegraphics[width=0.46\textwidth]{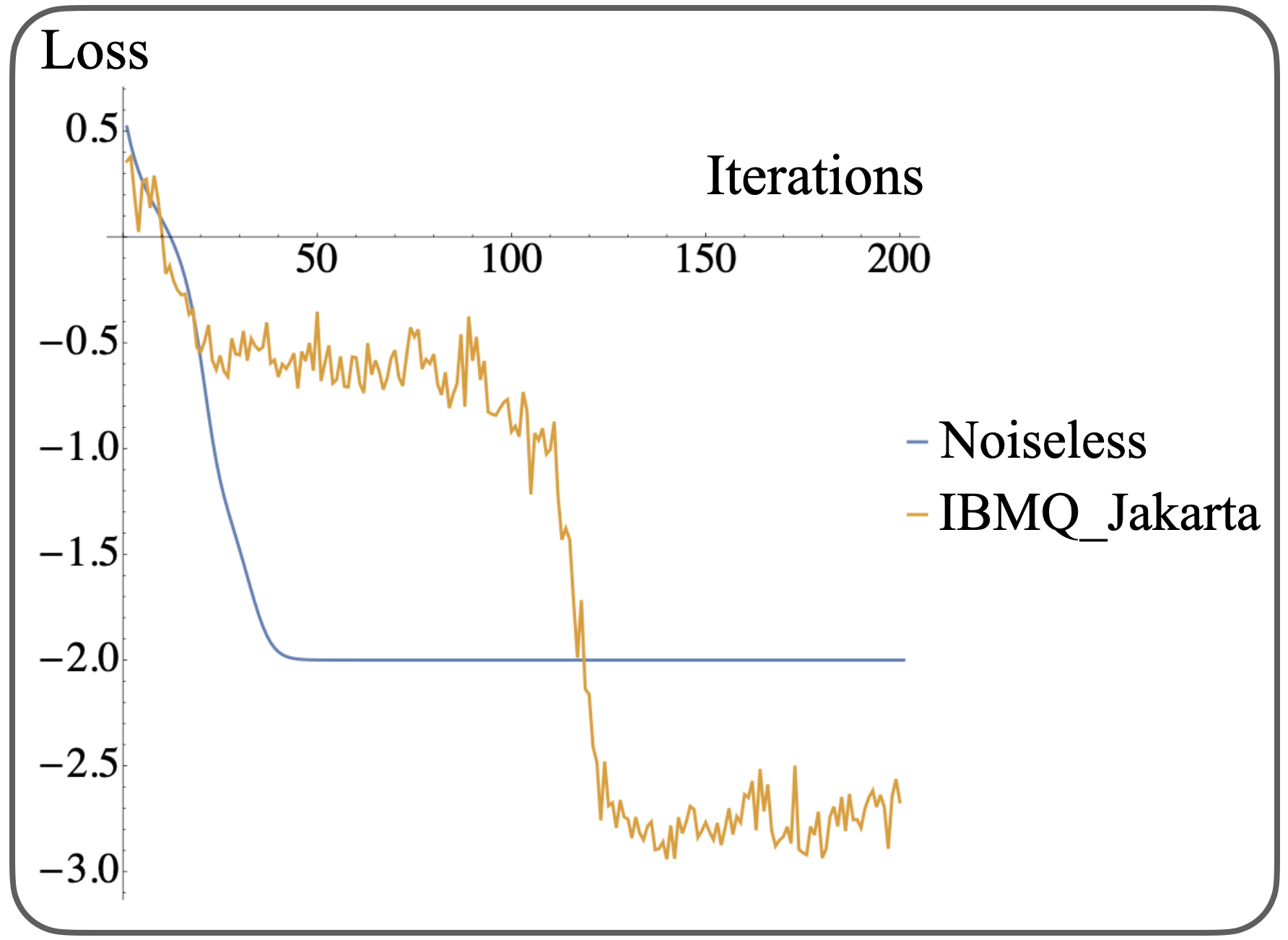}
\caption{A real quantum experiment. We use \texttt{IBMQ\_Jakarta} device with 10000 shots. }
\label{fig:converge_experiment}
\end{figure}

\newnewtext{To investigate the significance of saddle points in the case of variational quantum algorithms, we examine the generality of initial points that could be trapped by saddle points. In Fig.~\ref{fig:moreinitial}, we have studied 1000 initial variational angles randomly sampled between $[0,2\pi)$ in the same setup of Fig.~\ref{fig:performanceGaussian} of the main text with four qubits. Under identical gradient 
descent conditions, we have observed that 305 initial points lead to saddle points rather than local minima in the absence of noise. Consequently, the estimated probability of getting stuck near saddle points in our study is approximately $30.5\%$. Classical non-convex optimization theory demonstrates that saddle 
points are not anomalies but rather common features in loss function landscapes, making stochastic gradient descents crucial for most traditional machine learning applications. While a $30.5\%$ failure rate in optimization is manageable by simply repeating the algorithm multiple times to recover the global minimum with high probability, we anticipate that in quantum machine learning, getting trapped by saddle points will also be a general occurrence. Moreover, the likelihood of encountering saddle points is expected to grow significantly in higher-dimensional parameter spaces~\cite{bray_statistics_2007}. Here is a straightforward 
explanation: Saddle points are determined by the signs of Hessian eigenvalues. In models with $p$ parameters, there are $p$ Hessian eigenvalues. Assuming equal chances of positive and negative eigenvalues, the probability of obtaining a positive semi-definite Hessian becomes exponentially small ($2^{-p}$). Therefore, as the model size increases, saddle points become increasingly prevalent.
}

\begin{figure}[h]
\centering
\includegraphics[width=0.4\textwidth]{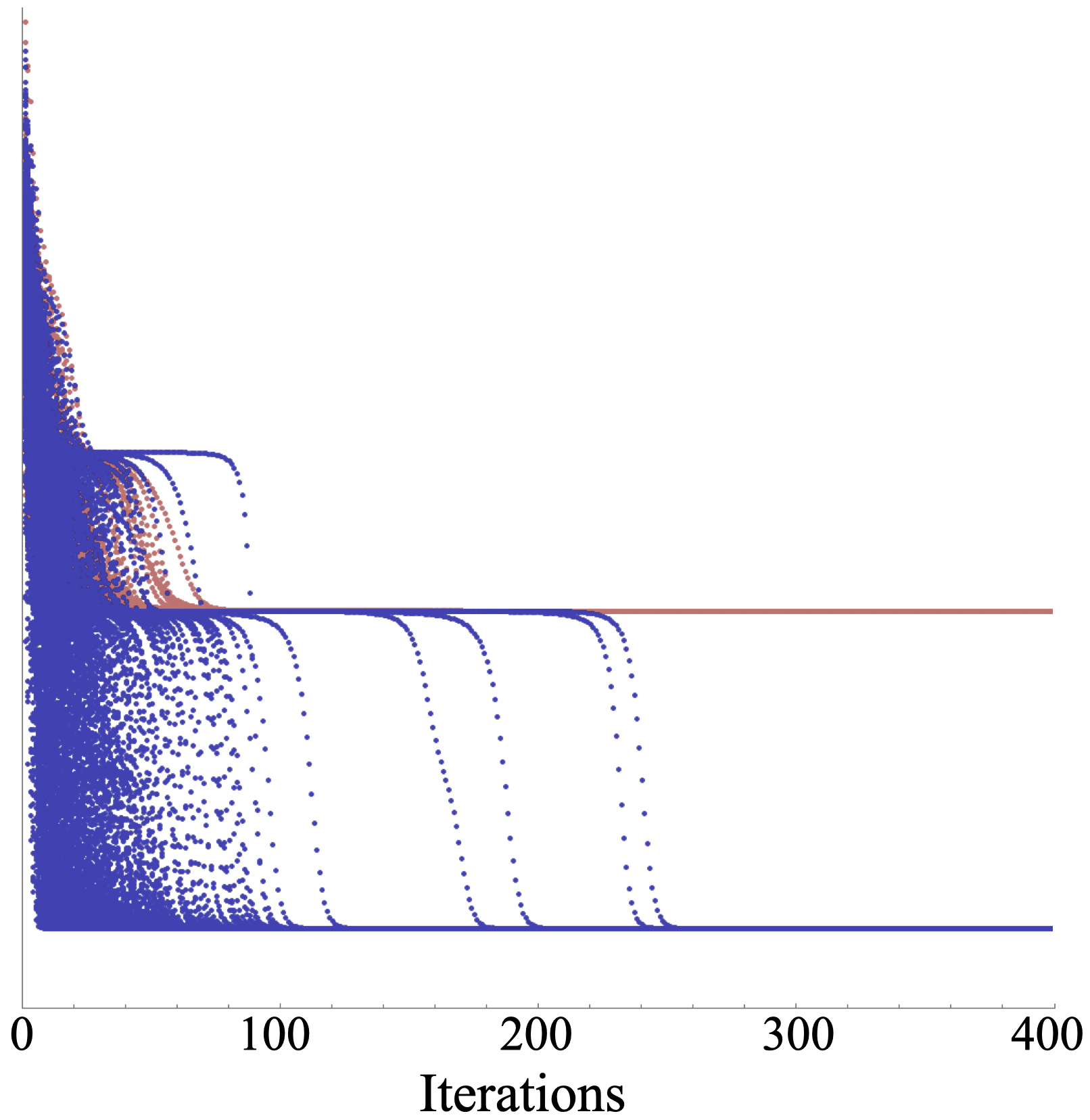}

\caption{{More initial points under the same gradient descent dynamics. We have studied the same gradient descent dynamics as in Fig.~3 of the main text with four qubits, where 1000 initial points have been sampled randomly. {We have found that in 305 cases (labeled in red), the gradient descent gets trapped by saddle points.}} }
\label{fig:moreinitial}
\end{figure}

\subsection*{Conclusion and outlook}
In this work, we have proposed small \je{stochastic noise} levels as an instrument to facilitate variational quantum algorithms. \je{This noise can be substantial, but should not be too large:} The way a noise level can strike the balance in overcoming getting stuck in saddle points and being detrimental is in some ways reminiscent of
the phenomenon of \emph{stochastic resonance} in statistical
physics \cite{StochasticResonance}. \newtext{This is a phenomenon in which
suitable small increases in levels of noise can increase in a metric of the 
quality of signal transmission, resonance or detection performance, 
rather than a decrease. Here, also} fine
tuned noise levels can facilitate resonance behaviour
and avoid getting trapped. 

\newnewtext{It is worth stressing that our results focus specifically on shot noise, which can help in overcoming saddle points. This is fundamentally distinct from contractive noise maps, such as depolarizing noise, which may induce barren plateaus~\cite{NoisyBP_Wang_2021} in variational quantum algorithms and does not help in avoiding saddle points.
It has been shown that shot noise leads to barren plateaus \emph{only} in the case of global observables~\cite{NoisyBP_Wang_2021}, which is not the setting considered in this work.
At the end of the day, one should expect specifics of the noise map, as for overly large noise levels of a certain type, the performance of variational approaches will also worsen \cite{Alexeev}.}

\newnewtext{We also emphasize that our analysis is focused on intrinsic quantum noise in quantum computing (shot noise), not merely the setting where one adds extra classical Gaussian noise to the gradient. While we use Gaussian noise to approximate the shot noise under certain conditions (when the number of measurement shots is large), our primary objective is to show that quantum noise---an unavoidable feature of quantum systems---can enhance optimization within limited noise level ranges. We do not claim that noise is universally beneficial nor advocate for intentionally adding extra classical noise. Instead, we emphasize the importance of identifying an optimal level of inherent quantum noise that balances performance and practicality. This perspective suggests a more tolerant approach to quantum noise in gradients, reducing the reliance on perfectly noise-less quantum systems while mitigating saddle point entrapment.}

On a higher level, 
\newje{our} work invites to think more deeply about 
\newtext{the use of classical stochasic noise in variational 
quantum algorithms as well as ways to prove performance guarantees
about such approaches. For example,}
\emph{Metropolis sampling}
inspired classical algorithms %
\newtext{in which a stochastic process satisfying detailed balance is set
up over variational quantum circuits 
may assist in avoiding}
getting 
stuck in rugged energy landscapes.

\newtext{It is also interesting to note that the technical results obtained here provide further insights into an alternative interpretation of the setting discussed here. Instead of 
regarding the noise as stochastic noise that facilitates the optimisation in the proven fashion established here,
one may argue that the noise channels associated with the
noise alter the variational landscapes in the first place
\cite{PhysRevA.104.022403,PRXQuantum.2.040309}. For example, 
such quantum channels are known to be able to 
break parameter symmetries in over-parameterised variational algorithms. It is plausible to assume that these altered
variational landscapes may be easier to optimise over.
It is an interesting observation in its own right that the 
technical results obtained here also have implications to
this alternative viewpoint, as the convergence
guarantees are independent on the interpretation.}
It is the hope that the present work puts the role of stochasticity in variational quantum computing into a new perspective, and contributes to a line of thought exploring the use of suitable noise and sampling for enhancing quantum computing schemes.

\subsection*{Author contributions}
J.~E.~and J.~L.~have suggested to exploit classical stochastic noise in variational quantum algorithms, and to prove convergence guarantees for the performance of the resulting algorithms. J.~L., A.~A.~M., F.~W., and J.~E.~have proven the theorems of convergence. J.~L.~and F.~W.~have devised and conducted the numerical simulations. J.~L.~has performed quantum device experiments under the guidance of L.~J. X.~J. has attended the discussions and contributed in the scientific updates of the draft. All authors have discussed the results and have written the manuscript.

\subsection*{Competing interests}
There are no competing interests.

\subsection*{Data and code availability}
\newtext{The data and code used for the experiments are available at  \texttt{https://github.com/junyuphybies/saddlepoints/}.} 

\subsection*{Acknowledgments}

J.~L.~is supported in part by the International Business Machines (IBM) Quantum through the Chicago Quantum Exchange, and the Pritzker School of Molecular Engineering at the University of Chicago through AFOSR MURI (FA9550-21-1-0209).
J.~L. and X.~J are supported in part by the University of Pittsburgh, School of Computing and Information, Department of Computer Science, Pitt Cyber, PQI Community Collaboration Awards, and by NASA under award number 80NSSC25M7057. 
F.~W., 
A.~A.~M.~and J.~E.~thank 
the ERC (DebuQC),
the BMBF (Hybrid, MuniQC-Atoms, DAQC),
the BMWK (EniQmA, PlanQK), 
the MATH+ Cluster of Excellence, 
the Quantum Flagship (Millenion, PasQuans2),
the Einstein Foundation (Einstein Unit on 
Quantum Devices),
\newnewtext{Berlin Quantum},
the QuantERA (HQCC), the Munich Quantum 
Valley (K8), the DFG (CRC 183),
\newnewtext{and the European Research Council (DebuQC)}
for support. L.~J.~acknowledges support from the ARO (W911NF-18-1-0020, W911NF-18-1-0212), ARO MURI (W911NF-16-1-0349), AFOSR MURI (FA9550-19-1-0399, FA9550-21-1-0209), DoE Q-NEXT, NSF (EFMA-1640959, OMA-1936118, EEC-1941583), NTT Research, and the Packard Foundation (2013-39273). This research used resources of the Oak Ridge Leadership Computing Facility, which is a DOE Office of Science User Facility supported under Contract DE-AC05-00OR22725.

\bibliography{sampling}
\newpage

\appendix


\section*{Appendix}

\subsection{Strong smoothness and Lipschitz-Hessian property}
\label{sec:smoothness}

\commentFre{In this section, we provide a proof of Theorem~14 of the main text.
As stated in the main text, we}
focus our analysis on ansatz circuits of the form
\begin{equation}
    \label{eq:circuit_structure}
    U({\theta}):=\prod_{\ell=1}^{p} W_{\ell} \exp \left(i \theta_{\ell} X_{\ell}\right)
\end{equation}
where $W_{\ell}$ and $X_{\ell}$ are respectively fixed unitaries and Hermitian operators.
{As a reminder, we want to show that the loss function
\begin{equation}
    \label{eq:loss_apdx}
    \mathcal{L}({\theta})=\left\langle 0\left| {{U}^{\dagger}}({\theta})OU({\theta} ) \right|0 \right\rangle
\end{equation}
is $\beta$-strongly smooth and has a $\rho$-Lipschitz Hessian, with
\begin{align}
    \beta & \le 2^2p\abs{O}_\infty\max_{j=1,\dots,p}\abs{X_{j}}_\infty^2, \\
    \rho & \le 2^3p^\frac{3}{2} \abs{O}_\infty\max_{j=1,\dots,p}\abs{X_{j}}_\infty^3.
\end{align}}

{To begin, we state three important facts about Lipschitz constants of multi-variate functions.
\begin{lemma}
    If $\mathcal{L}: \mathbb{R}^p \to \mathbb{R}$ is differentiable with bounded partial derivatives, then
    \begin{equation}
        L=\sqrt{p}\max_{j}\left(\sup_{{\theta}}\left|\frac{\partial \mathcal{L}({\theta})}{\partial \theta_j}\right|\right)
    \end{equation}
    is the Lipschitz constant for $\mathcal{L}$.
    \label{LemmaLipRptoR}
\end{lemma}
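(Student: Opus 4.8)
The plan is to reduce the multivariate Lipschitz bound to a one-dimensional estimate along the straight line segment joining any two points, and then to control the resulting directional derivative by the $\ell_2$ norm of the gradient via Cauchy--Schwarz.

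Concretely, I would fix arbitrary $\theta,\phi\in\mathbb{R}^p$ and define the auxiliary function $h(t):=\mathcal{L}(\phi+t(\theta-\phi))$ for $t\in[0,1]$, which is differentiable because $\mathcal{L}$ is. The first step is to write $\mathcal{L}(\theta)-\mathcal{L}(\phi)=h(1)-h(0)=\int_0^1 h'(t)\,dt$ using the fundamental theorem of calculus, and to compute $h'(t)=\partial\mathcal{L}(\phi+t(\theta-\phi))\cdot(\theta-\phi)$ by the chain rule.

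The second step bounds the integrand. By Cauchy--Schwarz, $|h'(t)|\le\abs{\partial\mathcal{L}(\phi+t(\theta-\phi))}_2\,\abs{\theta-\phi}_2$. It then remains to bound the gradient norm uniformly: since each of its $p$ squared components is at most $\big(\max_j\sup_\theta|\partial_j\mathcal{L}(\theta)|\big)^2$, one obtains $\abs{\partial\mathcal{L}}_2\le\sqrt{p}\,\max_j\sup_\theta|\partial_j\mathcal{L}(\theta)|=L$. Combining these gives $|h'(t)|\le L\abs{\theta-\phi}_2$ for every $t\in[0,1]$.

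The final step integrates this pointwise bound: $|\mathcal{L}(\theta)-\mathcal{L}(\phi)|\le\int_0^1|h'(t)|\,dt\le L\abs{\theta-\phi}_2$, which is exactly the claimed estimate. There is no serious obstacle here, as the argument is routine calculus; the only point requiring mild care is the uniform gradient-norm bound, where the supremum must be taken over all of $\mathbb{R}^p$ so that the estimate on $h'(t)$ is valid along the entire segment, and where the $\sqrt{p}$ factor arises precisely from replacing the sum of $p$ squared partials by $p$ times the largest.
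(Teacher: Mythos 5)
Your proof is correct and is the standard argument; the paper itself does not spell out a proof of this lemma but defers to Lemma 7 of Ref.~\cite{WildeSDP}, which establishes the bound in essentially the same way (integrating the directional derivative along the segment and bounding the gradient's $\ell_2$ norm by $\sqrt{p}$ times the largest partial derivative). No gaps; the only cosmetic remark is that what you prove is that $L$ is \emph{a} Lipschitz constant (an upper bound), which is all the lemma is used for.
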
}
{The proof is given in Ref.~\cite{WildeSDP} (Lemma 7). }

\begin{lemma}
    \label{lem:combined-lipschitz-constant}
    If $\mathcal{L}: \mathbb{R}^p \to \mathbb{R}^M$ is a function with all its $M$ components Lipschitz functions with Lipschitz constant $L_i$, then $\mathcal{L}$ has Lipschitz constant $L=\sqrt{\sum_{i=1}^M L_i^2}$.
    \label{LemmaLipRptoRm}
\end{lemma}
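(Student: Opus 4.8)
The plan is to work directly from the definition of the $l_2$ norm on the codomain $\mathbb{R}^M$ and reduce the vector-valued estimate to the componentwise Lipschitz bounds that are assumed. Writing $\mathcal{L}=(\mathcal{L}_1,\dots,\mathcal{L}_M)$ with each $\mathcal{L}_i:\mathbb{R}^p\to\mathbb{R}$ being $L_i$-Lipschitz, I would fix arbitrary points $\theta,\phi\in\mathbb{R}^p$ and aim to bound $\abs{\mathcal{L}(\theta)-\mathcal{L}(\phi)}_2$ by a multiple of $\abs{\theta-\phi}_2$. Since the points are arbitrary, any such bound establishes the Lipschitz constant globally.

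The key step is to expand the squared Euclidean norm of the difference over the $M$ components and then insert the per-component Lipschitz estimates $\abs{\mathcal{L}_i(\theta)-\mathcal{L}_i(\phi)}\le L_i\abs{\theta-\phi}_2$. Concretely, I would first write
\begin{equation}
    \abs{\mathcal{L}(\theta)-\mathcal{L}(\phi)}_2^2=\sum_{i=1}^M\abs{\mathcal{L}_i(\theta)-\mathcal{L}_i(\phi)}^2,
\end{equation}
and then substitute the hypotheses to obtain
\begin{equation}
    \abs{\mathcal{L}(\theta)-\mathcal{L}(\phi)}_2^2\le\sum_{i=1}^M L_i^2\abs{\theta-\phi}_2^2=\left(\sum_{i=1}^M L_i^2\right)\abs{\theta-\phi}_2^2.
\end{equation}
Taking square roots and using monotonicity of the square root then yields $\abs{\mathcal{L}(\theta)-\mathcal{L}(\phi)}_2\le\sqrt{\sum_{i=1}^M L_i^2}\,\abs{\theta-\phi}_2$, which is exactly the claim with $L=\sqrt{\sum_{i=1}^M L_i^2}$.

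There is no genuine obstacle here: the argument is a one-line consequence of the definition of the Euclidean norm together with the monotonicity of the square root. The only point that requires a moment of care is the bookkeeping of which norm acts on which space — the domain norm $\abs{\theta-\phi}_2$ on $\mathbb{R}^p$ is common to all components and can be factored out of the sum, whereas the codomain norm on $\mathbb{R}^M$ is the one being expanded into the sum over $i$. This lemma, combined with Lemma~\ref{LemmaLipRptoR}, is precisely what is needed later to track Lipschitz constants through the vector-valued gradient and Hessian maps when proving Theorem~\ref{thm:smoothness}.
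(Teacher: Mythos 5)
Your proof is correct: expanding the squared Euclidean norm over the $M$ components, inserting the componentwise Lipschitz bounds, and factoring out $\Vert\theta-\phi\Vert_2^2$ is exactly the standard argument. The paper does not actually spell out a proof of this lemma --- it only cites Lemma~8 of Ref.~\cite{WildeSDP} --- and your one-line derivation is precisely the argument that citation contains, so there is nothing to add.
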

{The proof is given in Ref.~\cite{WildeSDP} (Lemma 8).}
{Equipped with these facts we can proceed to derive an upper bound for the Lipschitz constant of functions from $\mathbb{R}^p$ to $\mathbb{R}^M$.}

\begin{lemma}
If $g: \mathbb{R}^p \to \mathbb{R}^M$ is a differentiable function with bounded gradient, then its Lipschitz constant {$L$ satisfies}
\begin{equation}
    L\le \sqrt{p M} \max_{i,j}\left(\sup_{{\theta}}\left| \frac{\partial g_i({\theta}) }{\partial \theta_j } \right|\right),
\end{equation}
where $g_i({\theta})$ the $i$-th component of $\theta\mapsto g({\theta})$.
\label{LemmaRPtoRMgeneral}
\end{lemma}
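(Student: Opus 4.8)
The plan is to prove this by directly chaining the two preceding lemmas and then applying a single coarse bound. Since $g$ has $M$ scalar components $g_i : \mathbb{R}^p \to \mathbb{R}$, each of which is differentiable with bounded partial derivatives, I would first invoke Lemma~\ref{LemmaLipRptoR} to obtain a Lipschitz constant for each component. This gives
\begin{equation}
    L_i = \sqrt{p}\,\max_{j}\left(\sup_{\theta}\left|\frac{\partial g_i(\theta)}{\partial \theta_j}\right|\right)
\end{equation}
for every $i = 1,\dots,M$.

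Next I would feed these component-wise constants into Lemma~\ref{lem:combined-lipschitz-constant}, which states that the combined Lipschitz constant of $g$ satisfies $L = \sqrt{\sum_{i=1}^M L_i^2}$. Substituting the expression for $L_i$ yields
\begin{equation}
    L = \sqrt{\sum_{i=1}^M p\,\max_{j}\left(\sup_{\theta}\left|\frac{\partial g_i(\theta)}{\partial \theta_j}\right|\right)^2}.
\end{equation}

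The final step is a uniform bound: for each fixed $i$, the inner quantity $\max_{j}\sup_{\theta}|\partial g_i(\theta)/\partial \theta_j|$ is at most the global maximum $\max_{i,j}\sup_{\theta}|\partial g_i(\theta)/\partial \theta_j|$ taken over all components and coordinates simultaneously. Replacing each summand by this global bound removes the $i$-dependence and collapses the sum to a factor of $M$, so that
\begin{equation}
    L \le \sqrt{p\,M\,\max_{i,j}\left(\sup_{\theta}\left|\frac{\partial g_i(\theta)}{\partial \theta_j}\right|\right)^2} = \sqrt{pM}\,\max_{i,j}\left(\sup_{\theta}\left|\frac{\partial g_i(\theta)}{\partial \theta_j}\right|\right),
\end{equation}
which is exactly the claimed inequality. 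I do not anticipate any genuine obstacle here: the argument is purely a composition of the two established lemmas followed by an elementary worst-case estimate, and the only point requiring minor care is ensuring the boundedness-of-gradient hypothesis is what licenses pulling the suprema outside and taking the global maximum over the finite index set $\{(i,j)\}$.
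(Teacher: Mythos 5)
Your proposal is correct and follows essentially the same route as the paper: apply Lemma~\ref{LemmaLipRptoR} to each component, combine via Lemma~\ref{lem:combined-lipschitz-constant}, and then bound $\sqrt{\sum_i L_i^2}$ by $\sqrt{M}\max_i L_i$ using the global maximum over $(i,j)$. No gaps.
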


\begin{proof}
Using Lemma $\ref{LemmaLipRptoR}$ and $\ref{LemmaLipRptoRm}$, we have
\begin{align}
    L = \left({\sum_{i=1}^M L_i^2}\right)^{1/2} 
    &\le \sqrt{M} \max_i(L_i) \nonumber \\
    &= \sqrt{p M} \max_{i,j}\left(\sup_{{\theta}}\left| \frac{\partial g_i({\theta}) }{\partial \theta_j } \right|\right),
\end{align}
{where $L_i$ is the Lipschitz constant of the $i$-th component of $\mathcal{L}$ as defined in Lemma \ref{lem:combined-lipschitz-constant}.}
\end{proof}
{Next, we focus on loss functions of the type in equation~(\ref{eq:loss_apdx}).}
\begin{lemma}
    \label{SupDeriv}
    The loss function {as defined in Eq.~(\ref{eq:loss_apdx})} (with ${\theta} \in \mathbb{R}^p$) \commentFre{satisfies}
    \begin{align}
        \max_{i_1,i_2\dots i_k}\left(\sup_{{\theta}}\left|\frac{\partial^k \mathcal{L}({\theta})}{\partial \theta_{i_k} \cdots \partial \theta_{i_2}\partial \theta_{i_1}}\right|\right)\le 2^k \abs{O}_\infty\max_{j=1,\dots,p}\abs{X_{j}}_\infty^k
    \end{align}
    where {$U(\theta)$ is given by Eq.~(\ref{eq:circuit_structure})}.
\end{lemma}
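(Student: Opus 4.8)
The plan is to reduce the bound on the $k$-th order derivatives of the scalar loss $\mathcal{L}$ to an operator-norm bound on the mixed partial derivatives of the unitaries $U(\theta)$ and $U^\dagger(\theta)$, and then to recombine these via the Leibniz rule applied to the product $U^\dagger O U$.

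First I would establish the elementary building block: for every multi-index $(j_1,\dots,j_m)$ one has
\[
\abs{\frac{\partial^m U(\theta)}{\partial\theta_{j_m}\cdots\partial\theta_{j_1}}}_\infty \le \max_{j}\abs{X_j}_\infty^{m},
\]
and the identical bound for $U^\dagger$. This follows directly from the product structure $U(\theta)=\prod_{\ell=1}^{p} W_\ell\, e^{i\theta_\ell X_\ell}$: differentiating once with respect to $\theta_\ell$ merely inserts a factor $iX_\ell$ beside the $\ell$-th exponential, so grouping the $m$ derivatives by the slot they act on turns each $e^{i\theta_\ell X_\ell}$ into $(iX_\ell)^{c_\ell}e^{i\theta_\ell X_\ell}$ with $\sum_\ell c_\ell=m$. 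Since the $W_\ell$ and the exponentials are unitary and hence have unit operator norm, submultiplicativity gives a bound of $\prod_\ell\abs{X_\ell}_\infty^{c_\ell}\le \max_j\abs{X_j}_\infty^{m}$.

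Next I would apply the Leibniz rule to $\mathcal{L}(\theta)=\bra{0}U^\dagger(\theta)\,O\,U(\theta)\ket{0}$. Because $O$ does not depend on $\theta$, the only $\theta$-dependent factors are the left factor $U^\dagger$ and the right factor $U$, so each of the $k$ derivative operators $\partial_{i_1},\dots,\partial_{i_k}$ is assigned to exactly one of these two factors. This produces a sum of $2^k$ terms indexed by subsets $S\subseteq\{1,\dots,k\}$,
\[
\frac{\partial^k \mathcal{L}}{\partial\theta_{i_k}\cdots\partial\theta_{i_1}}=\sum_{S\subseteq\{1,\dots,k\}}\bra{0}\bigl(\partial_{S^{c}}U^\dagger\bigr)\,O\,\bigl(\partial_{S}U\bigr)\ket{0},
\]
where $\partial_S$ denotes the mixed partial derivative with respect to the variables indexed by $S$. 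I would then bound each summand using $|\bra{0}M\ket{0}|\le\abs{M}_\infty$, submultiplicativity of the operator norm, and the building block above, obtaining $\abs{O}_\infty\max_j\abs{X_j}_\infty^{|S^{c}|}\max_j\abs{X_j}_\infty^{|S|}=\abs{O}_\infty\max_j\abs{X_j}_\infty^{k}$ for each term. Summing the $2^k$ identical bounds and taking the supremum over $\theta$ and the maximum over indices yields exactly the claimed estimate.

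The two norm estimates are routine; the only genuine care needed is the bookkeeping in the Leibniz expansion and, in the building block, the handling of repeated indices (several derivatives hitting the same slot), which is precisely where the powers $(iX_\ell)^{c_\ell}$ appear. I expect this repeated-index case to be the main—though still minor—obstacle, since one must confirm that $\abs{(iX_\ell)^{c_\ell}e^{i\theta_\ell X_\ell}}_\infty\le\abs{X_\ell}_\infty^{c_\ell}$ and that collapsing several derivatives onto one slot does not violate the $\max_j\abs{X_j}_\infty^{m}$ bound; apart from this, everything reduces to the triangle inequality and submultiplicativity of the operator norm.
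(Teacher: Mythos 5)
Your proposal is correct and follows essentially the same route as the paper's proof: distribute the $k$ derivatives between $U^\dagger$ and $U$ via the Leibniz rule (your subset-indexed sum over $S\subseteq\{1,\dots,k\}$ is the same bookkeeping as the paper's multi-index binomial expansion with $\sum_{\beta\le\alpha}\binom{\alpha}{\beta}=2^{|\alpha|}$), bound each term using $|\bra{0}M\ket{0}|\le\abs{M}_\infty$ and submultiplicativity, and control the derivatives of $U$ and $U^\dagger$ by inserting factors of $iX_\ell$ and using that the unitary factors have unit operator norm. The repeated-index case you flag is handled identically in the paper via the bound $\abs{\partial^\gamma U^\dagger}_\infty\le\abs{X_1}_\infty^{\gamma_1}\cdots\abs{X_p}_\infty^{\gamma_p}\le\max_j\abs{X_j}_\infty^{|\gamma|}$ and poses no difficulty.
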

\begin{proof}
We introduce the standard \emph{multi-index} notation. \newje{For this,}  we have
\begin{align}
   \partial^\alpha \mathcal{L}({\theta}) := \frac{\partial^k \mathcal{L}({\theta})}{\partial \theta_{i_k} \cdots \partial \theta_{i_2}\partial \theta_{i_1}}.
\end{align}
{With this, the multi derivative of the loss reads}
\begin{align}
   \partial^\alpha \mathcal{L}({\theta}) &= \bra{0} \partial^\alpha \left( {U}^{\dagger }({\theta} )OU({\theta} ) \right) \ket{0} \\
   &=  \sum_{\beta: \beta \leq \alpha}\left(\begin{array}{l}\alpha \\ \beta\end{array}\right)\bra{0}\left(\partial^{\beta}{U}^{\dagger }({\theta} )\right) O \left(\partial^{\alpha-\beta} U({\theta})\right) \ket{0},\nonumber 
\end{align}
where we have expolited the generalized Leibniz formula
\begin{align}
    \partial^{\alpha}(f g)=\sum_{\beta: \beta \leq \alpha}\left(\begin{array}{l}\alpha \\ \beta\end{array}\right)\left(\partial^{\beta} f\right)\partial^{\alpha-\beta} g.
\end{align}
{We have}
\begin{align}
    \vert \partial^\alpha \mathcal{L}\vert 
    &\le \sum_{\beta: \beta \leq \alpha}
    \left(\begin{array}{l}\alpha \nonumber\\ \beta\end{array}\right)\left|\bra{0}\left(\partial^{\beta}{U}^{\dagger }({\theta} )\right) O \left(\partial^{\alpha-\beta} U({\theta})\right) \ket{0}\right| \nonumber \\
    &= 2^{|\alpha|} \max_{\gamma: \gamma \leq \alpha}\left|\bra{0}\left(\partial^{\gamma}{U}^{\dagger }({\theta} )\right) O\, \partial^{\alpha-\gamma} U({\theta})\ket{0}\right| \nonumber \\
    &\le 2^{|\alpha|} \max_{\gamma: \gamma \leq \alpha} \left\Vert
    \left(\partial^{\gamma}{U}^{\dagger }({\theta} )\right) O\, \partial^{\alpha-\gamma} U({\theta})
    \right\Vert_\infty \nonumber \\
    &\le 2^{|\alpha|} \max_{\gamma: \gamma \leq \alpha}
    \left\Vert\partial^{\gamma}{U}^{\dagger }({\theta} ) \right\Vert_\infty
    \Vert O\Vert_\infty
    \left\Vert\partial^{\alpha-\gamma} U({\theta}) \right\Vert_\infty,
    \label{kderivativescomp}
\end{align}
where we have used the triangle inequality and 
the multi-binomial theorem formula to write 
\begin{equation}
\sum_{\beta: \beta \leq \alpha}\left(\begin{array}{l}\alpha \\ \beta\end{array}\right)=2^{|\alpha|}, 
\end{equation}
the fact that 
\begin{equation}
\left|\bra{0}A\ket{0}\right| \le \abs{A\ket{0}}_2 \le \abs{A}_\infty, 
\end{equation}
{which follows immediately by Cauchy-Schwarz and the sub-additivity of the $\Vert\cdot\Vert_\infty$ norm.}
Using the form of the parameterised unitary in Eq.~\eqref{eq:circuit_structure}, we can also observe that
\begin{align}
    \Vert \partial^{\gamma}{U}^{\dagger }({\theta} )\Vert_\infty
    &= \left\Vert
        \frac{\partial^{\gamma_p}}{\partial \theta^{\gamma_p}_p}\cdots\frac{\partial^{\gamma_1}}{\partial \theta^{\gamma_1}_1}{U}^{\dagger }({\theta})
    \right\Vert_\infty \nonumber \\
    &\le \Vert X_1\Vert^{\gamma_1}_\infty \cdots \Vert X_p\Vert^{\gamma_p}_\infty\nonumber \\ 
    &\le\left(\max_{j=1,\dots,p}\Vert X_j\Vert_\infty\right)^{\gamma_1+\cdots+\gamma_p} \nonumber \\
    &\le \max_{j=1,\dots,p} \Vert X_j\Vert_{\infty}^{\left|\gamma\right|}
\end{align}
where we have used that the sub-additivity of the infinity norm and the fact that the spectral norm of a unitary matrix is given by the unity.
Similarly, we have
\begin{equation}
    \Vert \partial^{\alpha-\gamma}{U}({\theta}) \Vert_\infty
    \le \max_{j=1,\dots,p} \Vert X_j\Vert_{\infty}^{|\alpha|-|\gamma|}.
\end{equation}
Therefore, 
combining the previous two inequalities with Eq.~\eqref{kderivativescomp}, we have
\begin{align}
    | \partial^\alpha \mathcal{L} |
    &\le 2^{|\alpha|} \Vert O\Vert_\infty \max_{j=1,\dots,p}\Vert X_j\Vert_{\infty}^{|\alpha|} \nonumber \\
    &= 2^k \Vert O\Vert_\infty \max_{j=1,\dots,p}\Vert X_j\Vert_{\infty}^k,
\end{align}
where we have used $\left|\alpha\right|=k$.
\end{proof}

{We are now ready to provide the proof of Theorem 
14 of the main text.
Since the loss function is a combination of sin and cosine functions, its derivatives exist and are bounded, and from this it follows that the loss function is strongly smooth and its Hessian is Lipschitz.
However, it is worth explicitly calculating $\beta$ and $\rho$ and bounding them to verify, for example, the scaling with the number of qubits.}
\begin{proof}
{We have the $\beta$-smooth constant defined by
the smallest $\beta$ with}
\begin{align}
    \abs{\partial \mathcal{L}({\theta})-\partial \mathcal{L}( {\theta^\prime})}\le \beta\abs{{\theta}- {\theta^\prime}},
\end{align}
which means we need to consider the Lipschitz constant for the $p$-dimensional function $\partial \mathcal{L}({\theta})$.
Using Lemma $\ref{LemmaRPtoRMgeneral}$ where $g({\theta})=\partial \mathcal{L}$ and $M=p$, we have
\begin{equation}
    \beta\le p\max_{i,j}\left(\sup_{{\theta}}\left|\frac{\partial^2 \mathcal{L}({\theta})}{\partial \theta_i\partial \theta_j}\right|\right).
\end{equation}
Applying Lemma \ref{SupDeriv}, 
we find
\begin{equation}
    \beta\le 2^2p \Vert O\Vert_\infty \max_{i}(\Vert X_i\Vert_\infty)^2,
\end{equation}
where we have used the matrix spectral (operator) norm.

The $\rho$-Hessian constant
is defined as
\begin{align}
    \Vert \partial^2 \mathcal{L}({\theta})-\partial^2 \mathcal{L}( {\theta^\prime}) \Vert_{\operatorname{H.S.}}
    \le \rho \Vert \theta- \theta^\prime \Vert_2
\end{align}
where $\partial^2 \mathcal{L}$ is the Hessian matrix and we have used the Hilbert-Schmidt matrix norm. 
{Note that the Hilbert Schmidt norm of a matrix is the 2-norm of the matrix \emph{vectorization} $\operatorname{vec}(\cdot)$.}
{We can now apply Lemma $\ref{LemmaRPtoRMgeneral}$, where $M=p^2$ since the Hessian is a map from $\mathbb{R}^p$ to $\mathbb{R}^{p\times p}$.
Defining $g({\theta})=\operatorname{vec}\left(\partial^2 \mathcal{L}({\theta})\right)$ we find}
\begin{equation}
    \rho\le p^\frac{3}{2}\max_{i,j,k}\left(\sup_{{\theta}}\left|\frac{\partial^3 \mathcal{L}({\theta})}{\partial \theta_k \partial \theta_i\partial \theta_j}\right|\right) .
\end{equation}
Thus, applying Lemma~\ref{SupDeriv}, we 
\je{arrive at}
\begin{equation}
    \rho \le 2^3p^\frac{3}{2} \Vert O\Vert_\infty\max_{i}(\Vert X_i\Vert_\infty)^3.
\end{equation}
\end{proof}

\subsection{Discussion on more general noise}
\label{sec:general-noise}
In this section, we discuss the impact of more general noise.
We assume that we have device noise that is constant in time, i.e., for each state preparation, effectively\je{,} we always 
encounter the same CPTP maps acting on the initial state.
This will change the state $\rho(\theta)$ 
at the end of the circuit to a noisy 
instance
$\rho_{\text{noisy}}(\theta)$ which can be modelled as the applications of parametrised unitary layers interspersed with 
\je{suitable} 
CPTP maps to {the initial state $\rho_0$ as
\begin{align}
    \rho_{\text{noisy}}(\theta)=\mathcal{N}_p\circ\mathcal{U}_p\circ \cdots \circ \mathcal{N}_1\circ\mathcal{U}_1\left(\rho_0\right)
    \label{eq:noisystate}
\end{align}
{where $\mathcal{N}_i$ and $\mathcal{U}_i$ are respectively noisy CPTP maps and the parametrised unitary channels.}
As a result, the loss function changes to
\begin{align}
    \mathcal{L}_{\text{noisy}}(\theta)=\Tr(H\rho_{\text{noisy}}(\theta)),
    \label{eq:noisycost}
\end{align}
i.e.\je{,} 
we now have to optimise an inherently different loss function.
Still, to estimate the noisy loss function 
$\mathcal{L}_{\text{noisy}}$\je{,} also here we will have to deal with statistical noise derived by the finite number of measurements.
In particular for the case of global depolarising noise with depolarising noise parameter $q\in \left[0,1\right]$
\begin{align}
    \mathcal{N}_i\left(\cdot\right)=\left(1-q\right)\left(\cdot\right)+q\Tr\left(\cdot\right)\frac{\mathbb{1}}{2^n}
\end{align} the cost function will be
\begin{align}
     \mathcal{L}_{\text{noisy}}(\theta)=\left(1-q\right)^p \mathcal{L}\left(\theta\right) + \left(1-\left(1-q\right)^p\right)\frac{\Tr\left(H\right)}{2^n}.
\end{align}
Therefore, in this case, the landscape of the cost function will be rescaled and shifted, but will preserve features of the noiseless-landscape like the position of saddle points.
\begin{proof}
We have 
\begin{align*}
    \rho_{\text{noisy}}(\theta)=&\left(1-q\right)\mathcal{N}_p\circ\mathcal{U}_p\circ \cdots \circ \mathcal{N}_2\circ\mathcal{U}_2\left(\mathcal{U}_1\left(\rho_0\right)\right)\\&+ q\,  \frac{\mathbb{1}}{2^n}\\
    =&\left(1-q\right)^2\mathcal{N}_p\circ\mathcal{U}_p\circ \cdots \circ \mathcal{N}_3\circ\mathcal{U}_3\left(\mathcal{U}_2\circ\mathcal{U}_1\left(\rho_0\right)\right)\\&+ q\left(\left(1-q\right)+ 1 \right) \left(\frac{\mathbb{1}}{2^n}\right)\\
    =&\left(1-q\right)^p\rho\left(\theta\right)+q\left(\sum^{p-1}_{k=0} (1-q)^k\right)\frac{\mathbb{1}}{2^n}\\
    =&\left(1-q\right)^p\rho\left(\theta\right)+q\frac{1-\left(1-q\right)^p}{q}\frac{\mathbb{1}}{2^n}.
\end{align*}
Plugging such a state into the definition of the noisy cost function (\ref{eq:noisycost}), we have the result.
\end{proof}}
\subsection{Additional numerical results}
\je{In this section,} we show additional numerical results to the one reported in the main text. 
In Fig.~\ref{fig:performanceShotsQML}\je{,} we show the estimated probability of avoiding the saddle point as a function of the number of shots, for the loss function given by the expectation value of the
\je{local Pauli} Hamiltonian ${H}=\sum_{i=1}^{N=4}Z_i$ over the circuit $\texttt{qml.StronglyEntanglingLayers}$ (see Fig.~\ref{fig:strongly-entangling-layer})
in \emph{Pennylane}~\cite{bergholm2018pennylane} where two layers 
\je{have been}
used. 
\begin{figure}[h]
\centering
\includegraphics[width=0.5 \textwidth]{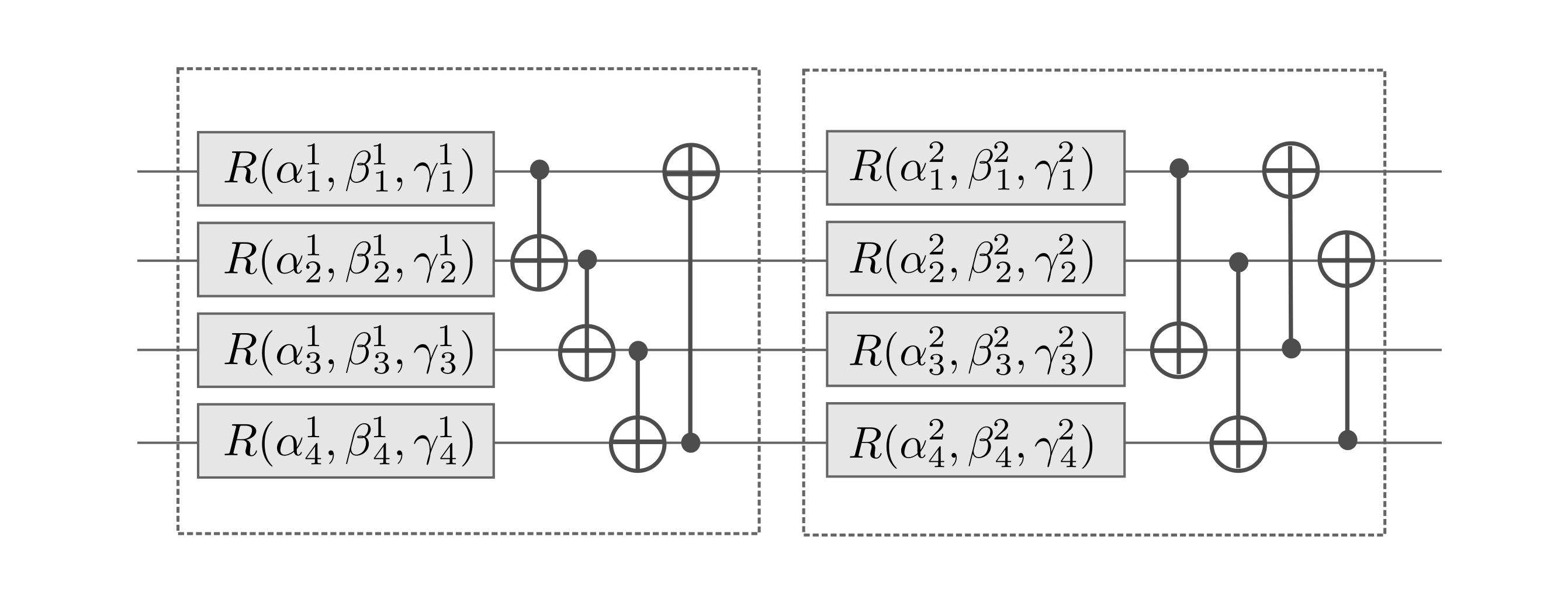}

\caption{{A four-qubit example of one strongly entangling layer as given in $\texttt{qml.StronglyEntanglingLayers}$ in Pennylane. The figure has been adopted from the documentation of Pennylane~\cite{bergholm2018pennylane}.}} %
\label{fig:strongly-entangling-layer}
\end{figure}

\begin{figure}[h]
\centering
\includegraphics[width=0.43\textwidth]{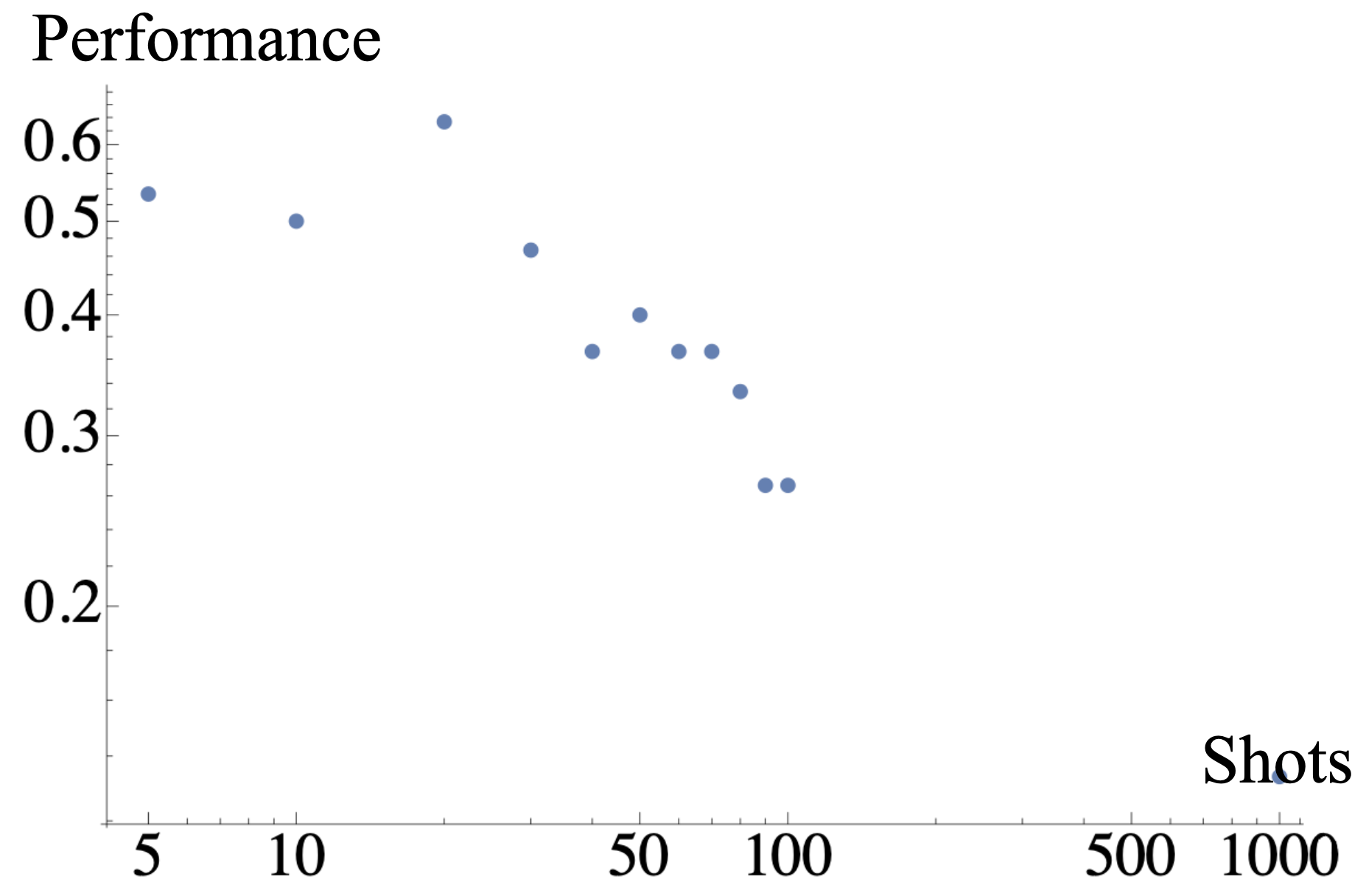}

\caption{
We quantify the performance against the number of shots of the quantum noise by the probability of saddle-point avoidance for 100 different independent instances with the same initial conditions.} %
\label{fig:performanceShotsQML}
\end{figure}

\begin{figure}[h]
\centering
\includegraphics[width=0.44\textwidth]{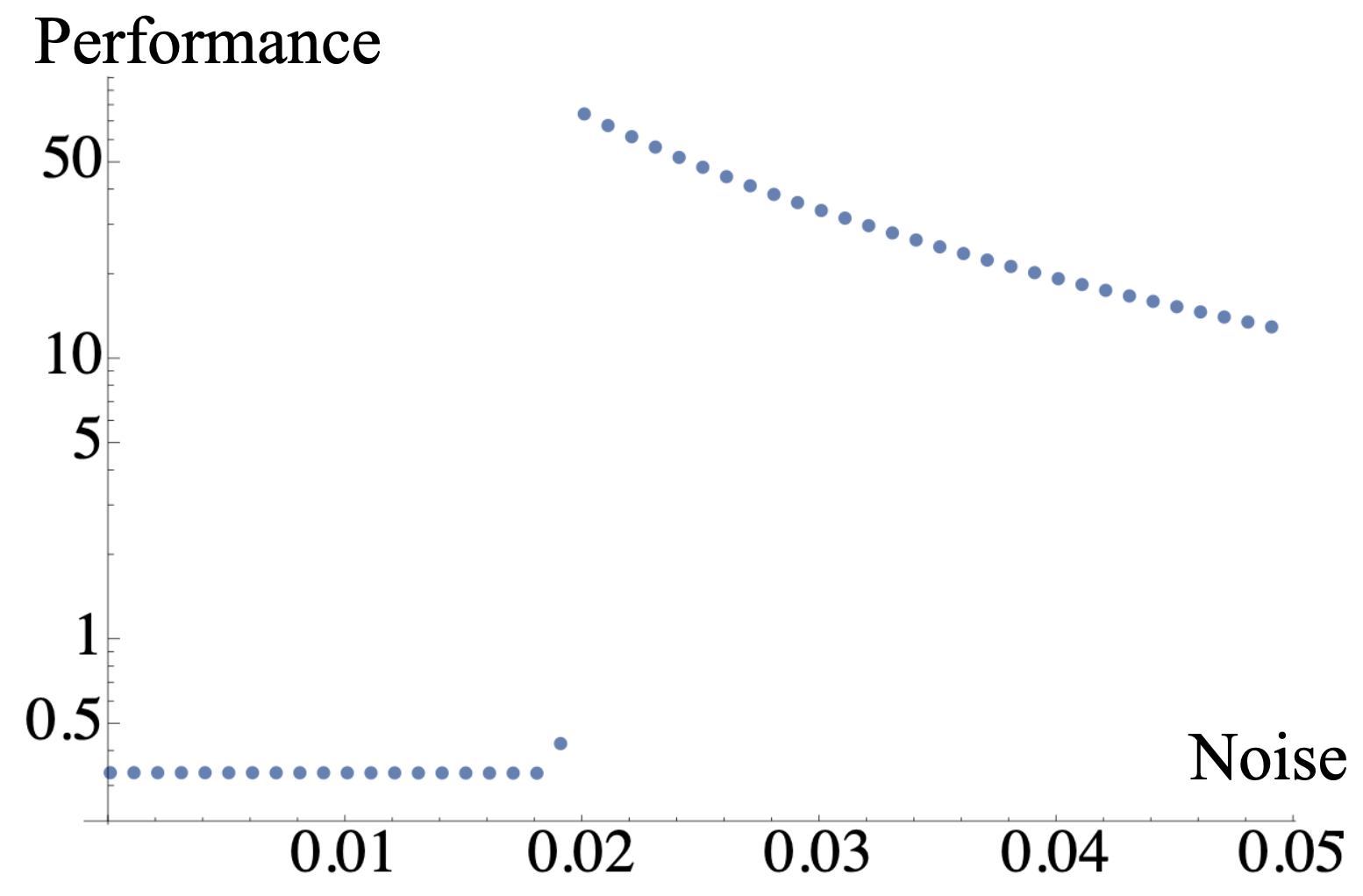}
\caption{We quantify the performance against the size of the noise $r$ (classical Gaussian noise) by ${1}/({\mathcal{L}-\mathcal{L}_\text{opt}})$. We 
\je{again have eight}
qubits.}
\label{fig:performance_morequbits}
\end{figure}

One could directly extend our results towards \je{the description of situations involving} more qubits. 
\je{Here, we} consider 
\je{eight}
qubits and two layers
\je{of quantum gates}. Now the loss landscape is 
\je{richer}
and we can converge at more integers. For instance, we find convergence at $-3$, $-4$, $-5$, $-6$ for different initial conditions. 
Figure \ref{fig:fake_more} illustrates saddle-point avoidance with different noise levels when noise is selected from Gaussian distributions. Figure \ref{fig:performance_morequbits} illustrates the performances among different sizes of noise levels and one can again find a critical value of the noise which leads to the saddle-point avoidance. 
\begin{figure}[h]
\centering
\includegraphics[width=0.45\textwidth]{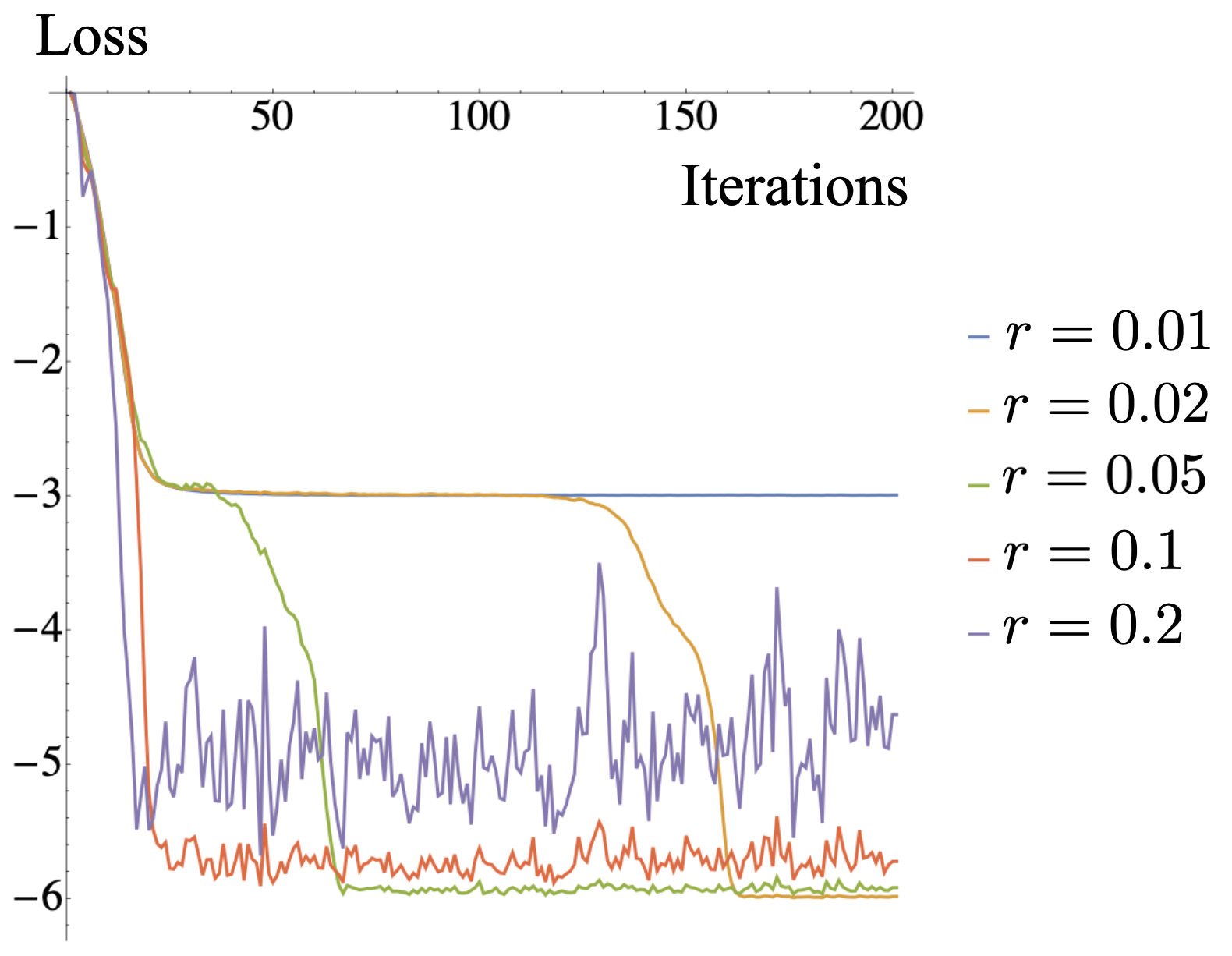}
\caption{Comparison of the loss evolution with or without noise with 
\je{eight} qubits. The noise has been drawn
manually from Gaussian distributions, and we keep the same initial conditions. We use four different values of the noise norms.}
\label{fig:fake_more}
\end{figure}
In Fig.~\ref{fig:performance_vqeH2}\je{,} we 
\je{depict}
the performances as a function of the noise level obtained for the $\text{H}_2$ molecule experiment. 

\begin{figure}[h]
\centering
\includegraphics[width=0.41\textwidth]{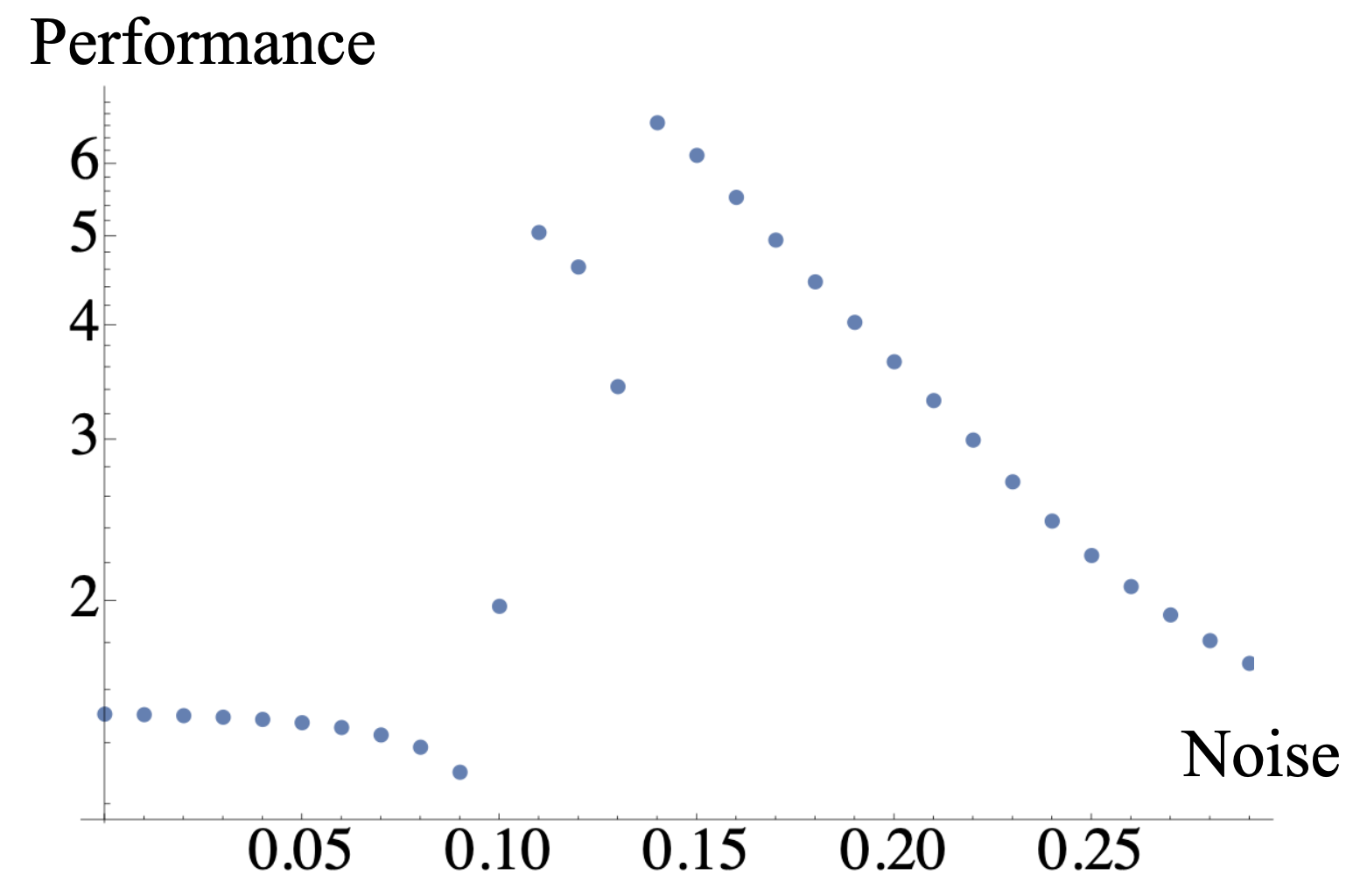}
\caption{In the Hydrogen VQE example, we quantify the performance against the size of the noise $r$ (classical Gaussian noise) by ${1}/({\mathcal{L}-\mathcal{L}_\text{opt}})$. }
\label{fig:performance_vqeH2}
\end{figure}

Furthermore, we try to find the relation between the convergence time $T$ and the noise size $r\sim \epsilon$. With the same setup, we plot the dependence between the convergence time (the time where we approximately get the true minimum) and the size of the noise in the Gaussian distribution case, in Fig.~\ref{fig:fittingt}. We find that the convergence time indeed decays when we add more noise, and we fit the scaling and find where $T\sim \#/\epsilon^{0.6}$, which is consistent with the bound $T\sim \#/\epsilon^2$ in theory. In Section \ref{sec:analyticheu}, we provide a heuristic derivation on the scaling $T\sim 1/\epsilon^2$ by dimensional analysis and other analytic heuristics. 

\begin{figure}[h]
\centering
\includegraphics[width=0.41\textwidth]{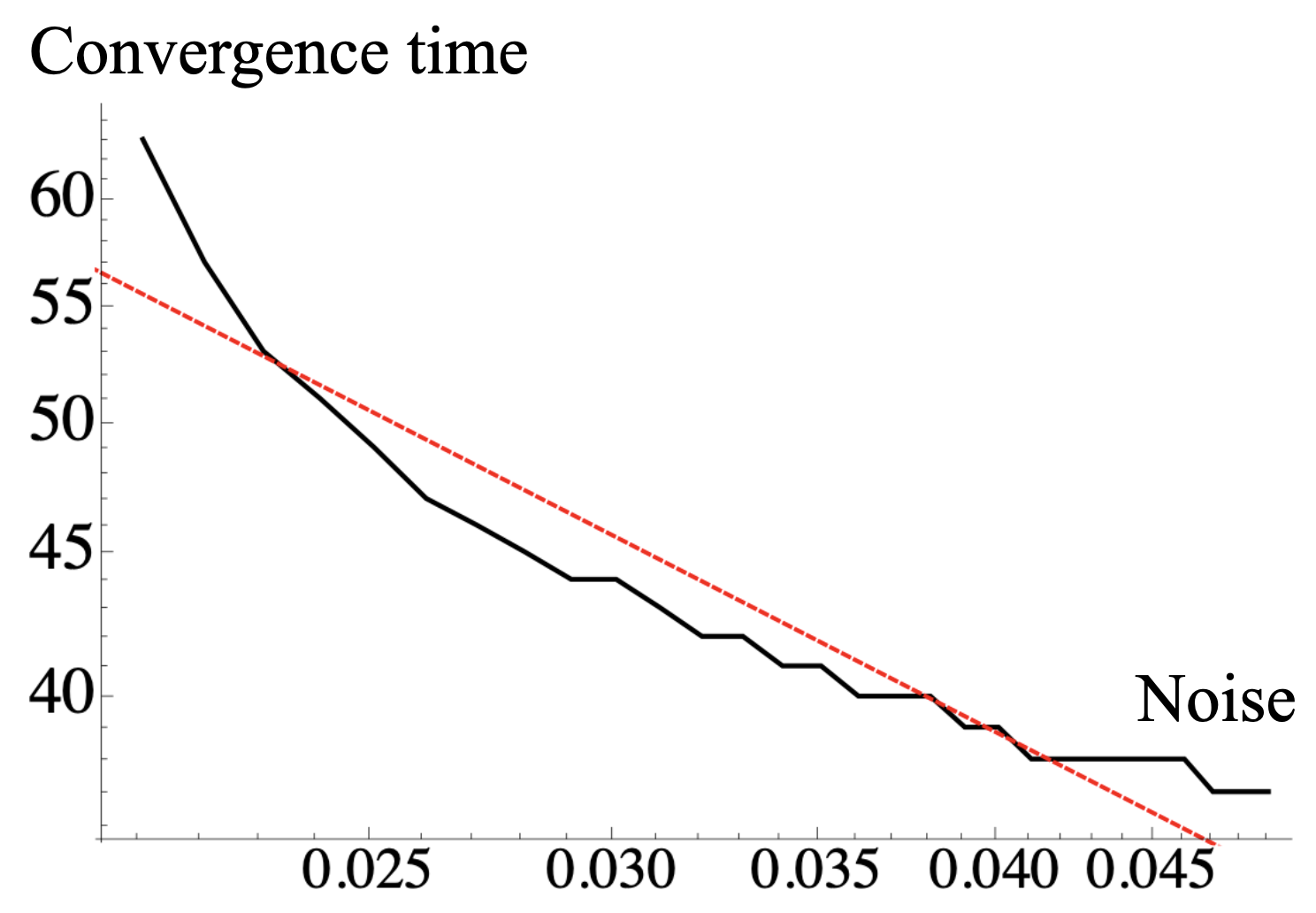}
\caption{The 
\je{relationship}
between the convergence time $T$ and the size of the noise $r \sim \epsilon$. The data is plotted in black, and we fit the data using $\#/\epsilon^\Delta$ in red, and we get $\Delta \approx 0.6 $. The setup is the same as before: We use 
\je{four} qubits and 
\je{two} layers for our first example Hamiltonian, and we use Gaussian noise simulation.
}
\label{fig:fittingt}
\end{figure}

\subsection{Analytic heuristics}\label{sec:analyticheu}
In this section, we provide a set of analytic heuristics about predicting the noisy convergence and the critical noise with significant improvements in performance. Our derivation is physical and heuristic, but we expect that they will be helpful to understand the nature of the noisy dynamics during gradient descent in the quantum devices. The results developed corroborate the idea that a balance between too little and too much noise will have to be struck.

\subsubsection{Brownian motion and the Polya's constant}
One of the simplest heuristics about noisy gradient descent is the theory of Brownian motion. 
Define $p(d)$, also known as Polya's constant, as the likelihood that a random walk on a $d$-dimensional lattice has the capability to return to its starting point.
It has been proven that \cite{polya1921aufgabe},
\begin{align}
    p(1)=p(2)=1~,
\end{align}
but 
\begin{align}
   p(d\ge 3) <1 ~.
\end{align}
\je{In fact, $d\mapsto p(d)$
has the closed
formula \cite{montroll1956random}}
\begin{align}
p(d)=1-{{\left( \int_{0}^{\infty }{{{\left[ {{I}_{0}}\left( \frac{t}{d} \right) \right]}^{d}}}{{e}^{-t}}dt \right)}^{-1}}~,
\end{align}
where $d>3$ is the number of training parameter in our case, and $I$ is the modified Bessel function of the first kind. One could compute numerical values of the probability $p(d)$ for increasing $d$. From $d=4$ to $d=8$, it changes monotonically from 0.19 to 0.07. It is hard to compute the integral accurately because of damping, but it is clear that it is decaying and will vanish for large $d$. 
In our problem, we could regard the process of noisy gradient descent as random walks in the space of variational angles. One could regard the returning probability roughly as the probability of coming back to the saddle point from the minimum. Thus, the statement about lattice random walk gives us intuition that it is less likely to return back when we have a large number of variational angles.

\subsubsection{Guessing $1/\epsilon^2$ by dimensional analysis}
One of the primary progress of the technical result
\je{presented in Ref.}~\cite{ExpTimeSaddle} is the $1/\epsilon^2$ dependence on the convergence time $T$ with the size of the noise \je{$\epsilon>0$}. 
Here\je{,} we show that one could guess such a result in the small $\eta$ limit (where $\eta$ is the learning rate) simply by dimensional analysis. 
Starting from the definition of the gradient descent algorithm,
\begin{align}
\delta {\theta _i } = {\theta _i }(t + 1) - {\theta _i }(t) =  - \eta \frac{{\partial \mathcal{L}}}{{\partial {\theta _i }}}~,
\end{align}
we 
\je{can}
instead study the variation of the loss function
\begin{align}
&\delta \mathcal{L} = \mathcal{L}(t + 1) - \mathcal{L}(t) \approx \sum\limits_i  {\frac{{\partial \mathcal{L}}}{{\partial {\theta _i }}}\delta {\theta _i }}  =  - \eta \sum\limits_i  {\frac{{\partial \mathcal{L}}}{{\partial {\theta _i }}}\frac{{\partial \mathcal{L}}}{{\partial {\theta _i }}}} \nonumber\\
&=  - 4\eta \sum\limits_i  {\frac{{\partial \sqrt \mathcal{L} }}{{\partial {\theta _i }}}\frac{{\partial \sqrt \mathcal{L} }}{{\partial {\theta _i }}}} \mathcal{L} ~.
\end{align}
Here, we use the assumption where $\eta$ is small, such that we could expand the loss function change $\delta \mathcal{L}$ by the first order Taylor expansion. Now we define
\begin{align}
{K_{\mathcal{L}}} := 4\sum\limits_i  {\frac{{\partial \sqrt \mathcal{L} }}{{\partial {\theta _i }}}\frac{{\partial \sqrt \mathcal{L} }}{{\partial {\theta _i }}}} ~,
\end{align}
and we have
\begin{align}
\delta \mathcal{L} =  - \eta {K_\mathcal{L}}\mathcal{L} ~.
\end{align}
If $K_\mathcal{L}$ is a constant (and we could assume it is true since we are doing dimensional analysis), we get
\begin{align}
\mathcal{L}(t) = {(1 - \eta {K_\mathcal{L}})^t} \approx {e^{ - \eta {K_\mathcal{L}}t}}~.
\end{align}
In general, we can assume a time-dependent solution as
\begin{align}
\mathcal{L}(t) = {(1 - \eta {K_\mathcal{L} (t)})^t} \approx {e^{ - \eta {K_\mathcal{L} (t)}t}} ~.
\end{align}
Now let us think about how the scaling of convergence time will be with noise. First, in the $\eta \to 0$ limit, for small $\eta$ the convergence time would get smaller, not larger (since it is immediately dominated by noise). So it is not possible that $T\sim 1/\eta $ to some powers in $\eta$. For this reason, {the only possibility left is} 
\begin{align}
T = \mathcal{O}(1) + \mathcal{O}(\eta ) + \mathcal{O}({\eta ^2}) \ldots ~
\end{align}
in the scaling of $\eta$. We will thus focus on the first $\mathcal{O}(1)$ term in the small $\eta$ limit. Furthermore, from the form $e^{ - \eta {K_{\mathcal{L}}}t}$ we know that $T\sim 1/K_{\mathcal{L}}$. 

Now let us count the dimension, assuming $\theta_i$ has the $\theta$-dimension 1 and $L$ has the $\theta$-dimension 0. From the gradient descent formula, $\eta$ has the $\theta$-dimension $2$, $K_{\mathcal{L}}$ has the $\theta$-dimension $-2$, and $\epsilon$ has the $\theta$-dimension $1$. The time $T$ is dimensionless since $\eta {K_{\mathcal{L}}}T$ is dimensionless and 
\je{appears}
in the exponent. Thus, since we know that $T\sim 1/K_{\mathcal{L}}$, there must be an extra factor balancing the $\theta$-dimension of $K_{\mathcal{L}}$. The only choice is $\epsilon^2$, and we cannot use $\eta$ because we are studying the term with the $\eta$-scaling $\mathcal{O}(1)$. Thus, we immediately get, $T\sim 
{1}/{(K_{\mathcal{L}} \epsilon^2)}$. That is how we get the dependence $T\sim 1/\epsilon^2$ by dimensional analysis. Note that the estimation only works in the small $\eta$ limit. More generally, we have
\begin{align}
    T=\sum\limits_{m,n>2m}^{{}}{\mathcal{O}
    \left(\frac{{{\eta }^{n-2m}}}{K_{\mathcal{L}}^{m}{{\epsilon }^{n}}}\right)}~
\end{align}
if we assume that the expression of $T$ is analytic. 

\subsubsection{Large-width limit}
The dependence $T\sim 1/\epsilon^2$ can also be made plausible using the \emph{quantum neural tangent kernel} (QNTK) theory. The QNTK theory has been established \cite{Liu:2021wqr,Liu:2022eqa,Liu:2022rhw} in the limit where we have a large number of trainable angles $d$ and a small learning rate $\eta$, with the quadratic loss function. According to Ref.~\cite{Liu:2022rhw},
we use the loss function
\begin{align}
\mathcal{L}(\theta)=\frac{1}{2}\left(\left\langle\Psi_{0}\left|U^{\dagger}(\theta) O U(\theta)\right| \Psi_{0}\right\rangle-O_{0}\right)^{2} =: \frac{1}{2} \varepsilon^{2} ~.
\end{align}
Here, we make predictions on the eigenvalue of the operator $O$ towards $O_0$. And we use $U(\theta) $ as the variational ansatz. The gradient descent algorithm is
\begin{align}
{{\theta }_{i}}(t+1)-{{\theta }_{i}}(t)=: 
\delta {{\theta }_{i}}=-\eta \frac{\partial{\mathcal{L}}}{\partial {{\theta }_{i}}}~
\end{align}
when there is no noise. Furthermore, we hereby model the noise by adding Gaussian random variables in each step of the update. Those random fluctuations are independently distributed through $\Delta \theta_i\sim \mathcal{N} (0,\epsilon^2)$. 
Now, in the limit where $d$ is large, we have an analytic solution of the convergence time,
given by
\begin{align}
T\approx \frac{\log \left( \frac{\epsilon }{\sqrt{2{{\varepsilon }^{2}}(0)\eta -{{\varepsilon }^{2}}(0){{\eta }^{2}}K+{{\epsilon }^{2}}}} \right)}{\log (1-\eta K)}~,
\end{align}
where $K:=K_{\mathcal{L}}/2$. In the small $\eta$ limit, we have
\begin{align}
T\approx \frac{{{\varepsilon }^{2}}(0)}{\epsilon^2 K} ~.
\end{align}
This gives substance to the claim in 
the dimensional analysis.

\subsubsection{Critical noise {from random walks}}
Moreover, using the result from Ref.~\cite{Liu:2022rhw}, we 
\je{can}
also estimate the critical noise $\epsilon_{\text{cri}}$, namely, the critical value of phase transition of the noise size that leads to better performance and avoiding the saddle points. 

{In particular, here we will be interested in the case where the saddle-point avoidance is triggered purely by random walks without any extra potential. The assumption, although may not be real in the practical loss function landscape, might still provide some useful guidance.}
According to Ref.~\cite{Liu:2022rhw}, we have
\begin{align}
\overline{{{\varepsilon }^{2}}}(t)={{(1-\eta K)}^{2t}}\left( {{\varepsilon }^{2}}(0)-\frac{\epsilon^{2}}{\eta (2-\eta K)} \right)+\frac{\epsilon^2}{\eta (2-\eta K)}~.
\end{align}
Here $\overline{\varepsilon^2}$ is the variance of the residual training error $\varepsilon$ after averaging over the realizations of the noise. Imagine that now the gradient descent process is running from the saddle point to the exact local minimum, we have
\begin{align}
\frac{1}{2}\left( {{\left| {{\varepsilon }_{\text{saddle}}} \right|}^{2}}-{{\left| {{\varepsilon }_{\text{min}}} \right|}^{2}} \right)=\Delta_{\mathcal{L}}\sim \frac{\epsilon ^{2}}{2\eta (2-\eta K)}~,
\end{align}
where $\Delta_{\mathcal{L}}$ is the distance of the loss function from the saddle point to the local minimum (defined also in the main text), $\Delta_{\mathcal{L}} = \mathcal{L}_{\text{saddle}}-\mathcal{L}_{\text{mininum}}=\frac{1}{2}\left( {{\left| {{\varepsilon }_{\text{saddle}}} \right|}^{2}}-{{\left| {{\varepsilon }_{\text{min}}} \right|}^{2}} \right)$. So we get an estimate of the critical noise,
\begin{align}
\epsilon_{\text{cri}}^2\sim\Delta_{\mathcal{L}}\left( 2\eta (2-\eta K) \right)\sim 4\eta \Delta_{\mathcal{L}}~.
\end{align}
Here\je{,} in the most right hand side of the formula\je{,} we use the approximation where $\eta$ is small enough. This formula might be more generic beyond QNTK, since one could regard it as an analog of Einstein's formula of \emph{Brownian motion},
\begin{align}
\overline{x^2}(t)= 2 D t ~,
\end{align}
with the averaging moving distance square $\overline{x^2}$, mass diffusivity $D$, and time $t$ in the Brownian motion.

{One can also show such a scaling in the linear model. Say that we have a linear loss function
\begin{align}
\mathcal{L} = \sum\limits_\mu  {{c_\mu }{\theta _\mu }}  + b~,
\end{align}
with constants $c_\mu$ and $b$. For simplicity, we assume that the initialisation $\theta(0)$ makes $\mathcal{L}(\theta(0))=\mathcal{L}(0)>0$. The gradient descent relation is
\begin{align}
\delta {\theta _\mu } = {\theta _\mu }(t + 1) - {\theta _\mu }(t) =  - \eta \frac{{\partial \mathcal{L} }}{{\partial {\theta _\mu }}} =  - \eta {c_\mu }~.
\end{align}
One can find the closed-form solution
\begin{align}
{\theta _\mu }(t) = {\theta _\mu }(0) - \eta t{c_\mu }~.
\end{align}
It is also possible to identify the change of the loss function to be
\begin{align}
\mathcal{L} (t) = \sum\limits_\mu  {{c_\mu }{\theta _\mu }(0)}  + b - \eta t\sum\limits_\mu  {c_\mu ^2}  = \mathcal{L} (0) - \eta t\sum\limits_\mu  {c_\mu ^2} ~.
\end{align}
The convergence time can be estimated as
\begin{align}
T = \frac{{\mathcal{L} (0)}}{{\eta \sum\limits_\mu  {c_\mu ^2} }}~.
\end{align}
Now, instead, we add a random $\xi_\mu (t)$ in the gradient descent dynamics, which is following the normal distribution ${\xi _\mu }(t) \sim \mathcal{N}(0,\sigma_{\mu}^2)$. Now, the stochastic gradient descent equation is
\begin{align}
\delta {\theta _\mu } = {\theta _\mu }(t + 1) - {\theta _\mu }(t) =  - \eta \frac{{\partial \mathcal{L} }}{{\partial {\theta _\mu }}} + {\xi _\mu } =  - \eta {c_\mu } + {\xi _\mu }~,
\end{align}
which gives the solution
\begin{align}
{\theta _\mu }(t) = {\theta _\mu }(0) - \eta t{c_\mu } + \sum\limits_{i = 0}^{t - 1} {{\xi _\mu }(i)} ~.
\end{align}
Thus, we get the loss function
\begin{align}
&\mathcal{L}(t) = \sum\limits_\mu  {{c_\mu }{\theta _\mu }(0)}  + b - \eta t\sum\limits_\mu  {c_\mu ^2}  + \sum\limits_{\mu ,i = 0}^{t - 1} {{c_\mu }{\xi _\mu }(i)} \nonumber\\
&= \mathcal{L}(0) - \eta t\sum\limits_\mu  {c_\mu ^2}  + \sum\limits_{\mu ,i = 0}^{t - 1} {{c_\mu }{\xi _\mu }(i)} ~.
\end{align}
The critical point ${\sigma _\mu } = {\varepsilon _{{\rm{cri}}}}$ can be identified as
\begin{align}
\eta t\sum\limits_\mu  {c_\mu ^2}  \sim {\epsilon _{{\rm{cri}}}}\sqrt t 
\left( {\sum\limits_\mu  {c_\mu ^2} }\right)^{1/2} ~,
\end{align}
where the standard deviation of the noise term will compensate the decay. Thus, we get
\begin{align}
{\epsilon _{{\rm{cri}}}} \sim \eta \sqrt t 
\left({\sum\limits_\mu  {c_\mu ^2} } 
\right)^{1/2}~.
\end{align}
In the limit where the noise
levels are small, we can study the behaviour in the late time
limit
\begin{align}
t=T=\frac{{\mathcal{L} (0)}}{{\eta \sum\limits_\mu  {c_\mu ^2} }}~.
\end{align}
So we get
\begin{align}
{\epsilon _{{\rm{cri}}}} \sim \sqrt {\eta \mathcal{L}(0)}  \sim \sqrt {\eta {\Delta _{\cal L}}} ~,
\end{align}
which is
\begin{align}
\epsilon _{{\rm{cri}}}^2 \sim \eta {\Delta _{\cal L}}~.
\end{align}
Thus, the linear model result is consistent with the derivation using QNTK with the quadratic loss. }

\subsubsection{{Phenomenological critical noise}}

{In practice, random walks may not be the only source triggering the saddle-point avoidance, leading to the $\sqrt{t}$ scaling in loss functions. Since saddle points have negative Hessian eigenvalues, those directions will provide driven forces with linear contributions $\propto t$ in the loss function. In the linear model, we can estimate the critical noise as}
\begin{align}
\eta t{\Delta _{\mathcal{L}}}\sim {\epsilon _{{\rm{cri}}}}t~,
\end{align}
{which leads to the linear relation}
\begin{align}
{\epsilon _{{\rm{cri}}}} \sim \eta {\Delta _{\mathcal{L}}}~.
\end{align}

{In Fig.~\ref{fig:critical}, we show the dependence of the critical noise ${\epsilon _{{\rm{cri}}}}$ on $\eta {\Delta _{\mathcal{L}}}$ in our numerical example with four qubits from Fig.~3 of the main text and its linear fitting. We find that our theory is justified for a decent range of learning rate. In our numerical data, in smaller learning rates the increase of the critical noise might be smoother, while for larger critical noise, the growth is more close to linear scaling. If we fit for the exponent of ${\epsilon _{{\rm{cri}}}} \sim {(\eta {\Delta _L})^{{\Delta _{{\rm{cri}}}}}}$ we get ${\Delta _{{\rm{cri}}}} \approx {\rm{0.8722}}$. }

{One could also transform critical learning rates towards the number of shots if the noises are dominated from quantum measurements. One can assume the scaling
\begin{align}
{\epsilon _{{\rm{cri}}}} \sim \frac{\eta }{{\sqrt {{N_{{\rm{cri}}}}} }}~,
\end{align}
and we can obtain the optimal number of shots
\begin{align}
{N_{{\rm{cri}}}} = {N_{{\rm{cri}}}} = {c_N}{\eta ^{2 - 2{\Delta _{{\rm{cri}}}}}}\Delta_{\mathcal{L}}^{ - 2{\Delta _{{\rm{cri}}}}}~.
\end{align}
One can then take $\Delta_{\text{cri}} = 1 $ as a good approximation, assuming that the saddle-point avoidance is dominated by negative saddle-point eigenvalues. ${c_{{\rm{cri}}}}$ is a constant depending on the circuit architecture and the loss function landscapes. This formalism could be useful to estimate the optimal number of shots used in variational quantum algorithms. For instance, we take $\Delta_{\text{cri}} = 1 $ and we get
\begin{align}
{N_{{\rm{cri}}}} = {c_N}\Delta_{\mathcal{L}}^{ - 2}~.
\end{align}
In the situation of Fig.~3 of the main text, we obtain
\begin{align}
\epsilon = {c_\eta}\frac{\eta }{{\sqrt N }}
\end{align}
with $c_\eta$ estimated as ${c_\eta} \approx {\rm{1.19733}}$ from \texttt{Qiskit}. So we get
\begin{align}
{N_{{\rm{cri}}}} = \frac{{c_\eta ^2}}{{c_{\epsilon}^2}}\frac{1}{{\Delta _{\cal L}^2}} = 131.8~,
\end{align}
which is the optimal numbers of shots in this experiment with pure measurement noises. Here, $c_N= {{c_\eta ^2}}/{{c_{\epsilon}^2}}$.}

\begin{figure}[h!]
\centering
\includegraphics[width=0.46\textwidth]{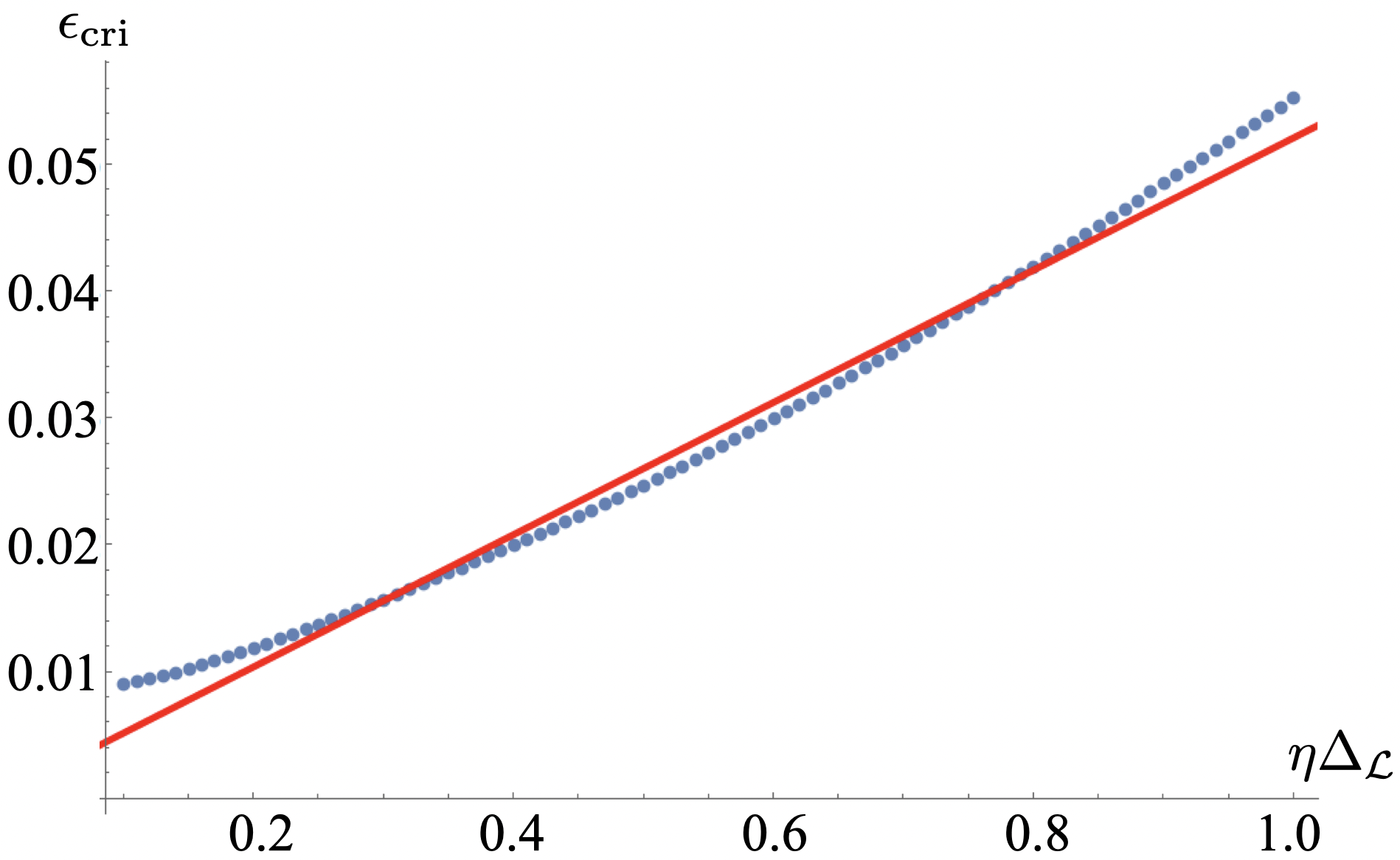}
\caption{{The dependence of the critical noise ${\epsilon _{{\rm{cri}}}}$ on $\eta {\Delta _{\mathcal{L}}}$ in the example of 
four qubits. Here, we fit the dependence by the linear relation ${\epsilon _{{\rm{cri}}}} = c_{\epsilon} \eta {\Delta _{\mathcal{L}}} $ where $c_{\epsilon} = 0.0521404$.} }
\label{fig:critical}
\end{figure}

\end{document}